\documentclass[a4paper,10pt]{llncs}
\usepackage[utf8]{inputenc}
\usepackage{esvect}
\usepackage{amsmath}
\usepackage{tikz}
\usepackage[verbose]{wrapfig}
\usepackage{stmaryrd}
\usepackage{amssymb}
\usepackage{tikz}
\usepackage{caption}
\usepackage{algorithm2e}
\usetikzlibrary{tikzmark}
   \usetikzlibrary{calc,trees,positioning,arrows,chains,shapes.geometric,%
      decorations.pathreplacing,decorations.pathmorphing,shapes,%
      matrix,shapes.symbols}
%    \usetikzlibrary{graphs,graphs.standard} %     \usegdlibrary{layered}
%\usepackage{hyperref}
\newcommand{\fence}{\mathsf{fence}}
\usepackage{listings}
\usepackage{stmaryrd}
\usepackage{xcolor}
\makeatletter
\newcommand{\removelatexerror}{\let\@latex@error\@gobble}

\makeatother
% \newcommand{\EmBrace}[3][]{%
%     \begin{tikzpicture}[overlay,remember picture]%
%         \draw [decoration={brace,amplitude=0.5em},decorate,thick, #1] (#3) -- node[left] {$\HMap(s)=~$} (#2);
%     \end{tikzpicture}%
%     }%
%     \newcommand{\EmBracet}[3][]{%
%     \begin{tikzpicture}[overlay,remember picture]%
%         \draw [decoration={brace,amplitude=0.5em},decorate,thick, #1] (#3) -- (#2);
%     \end{tikzpicture}%
%     }%

%opening
\title{Efficient Verification of Concurrent Programs Over the TSO Memory Model}
\author{Chinmay Narayan, Subodh Sharma, S.Arun-Kumar}
\date{}
\institute{Indian Institute of Technology Delhi}
\begin{document}
\tikzset{
      rect/.style={
      % The shape:
      rectangle,
      % The size:
%       double=white,
%       double distance=2pt,
%       minimum size=6mm,rounded corners=2mm,
      % The border:
      very thick,
      draw=red!50!black!50, % 50% red and 50% black,
      % and that mixed with 50% white
      % The filling:
      top color=white, % a shading that is white at the top...
      bottom color=red!50!black!20, % and something else at the bottom
      % Font
      font=\itshape
      },
     forallnode/.style={
      % The shape:
      rectangle,
      % The size:
      minimum size=6mm,rounded corners=2mm,
      % The border:
      very thick,
      draw=red!50!black!50, % 50% red and 50% black,
      % and that mixed with 50% white
      % The filling:
      top color=white, % a shading that is white at the top...
      bottom color=red!50!black!20, % and something else at the bottom
      % Font
      font=\itshape
      },
% finalexistsnode/.style={
%       % The shape:
%       rectangle,
%       % The size:
%       minimum size=6mm,
%       double=white,
%       double distance=2pt,
%       % The border:
%       very thick,
%       draw=red!50!green!50, % 50% red and 50% black,
%       % and that mixed with 50% white
%       % The filling:
%       top color=white, % a shading that is white at the top...
%       bottom color=red!50!yellow!20, % and something else at the bottom
%       % Font
%       font=\itshape
%       },
% existsnode/.style={
%       % The shape:
%       rectangle,
%       % The size:
%       minimum size=6mm,
%       % The border:
%       very thick,
%       draw=red!50!green!50, % 50% red and 50% black,
%       % and that mixed with 50% white
%       % The filling:
%       top color=white, % a shading that is white at the top...
%       bottom color=red!50!yellow!20, % and something else at the bottom
%       % Font
%       font=\itshape
%       },
% pt/.style={circle,fill=#1,inner sep=0mm,minimum size=11pt,font=\scriptsize},
}
\tikzstyle{every node}=[draw=white,thick,anchor=west]
\tikzstyle{selected}=[draw=red,fill=red!30]
\tikzstyle{optional}=[dashed,fill=gray!50]
\newcommand{\stateofart}{
\begin{tikzpicture}[ grow via three points={one child at (0.5,-0.7) and
  two children at (0.5,-0.7) and (0.5,-1.4)},
  edge from parent path={(\tikzparentnode.south) |- (\tikzchildnode.west)}]
% [sibling distance=10em,
%   every node/.style = {shape=rectangle, rounded corners,
%     draw, align=center,
%     top color=white, bottom color=blue!20}]]
   \node {RMM Verification}
    child { node {Precise}
        child { node {Safety Property}
	  child { node{$\begin{array}{l}\mbox{Memorax\cite{Atig:2010:VPW:1707801.1706303,memoraxtacas11}}\\ \mbox{Remmex\cite{DBLP:conf/spin/LindenW11}\cite{remmextacas13}} \end{array}$}}
	  }
	child [missing] {}				
       child { node {SC Property}
	  	child { node{$\begin{array}{l}\mbox{Robustness\cite{ShashaSnirtoplas88,Burnimtestingstabilitytacas11,bouajjanirobustnessesop13,aglavestabilitycav11}}\\ \mbox{Persistence\cite{persistenceesop15}} \end{array}$}}
	}
	child [missing] {}				
    }
     child [missing] {}				
     child [missing] {}
     child [missing] {}				
     child [missing] {}				
%     child [missing] {}				
    child { node {Under-approximate}
      child { node {Buffer-bounded}
	     	  	child { node{
	     	  	$\begin{array}{l}\mbox{\cite{predicateabstractionsas13,inferencefences13,refinementpropogationsas14,effectiveabstractionvmcai15}}  
	     	  	\end{array}$
	     	  	}
	     }
% 	     child [missing] {}
  % 	     child{ node{Infinite Data}
  % 	     child { node{$\begin{array}{l}\mbox{Robustness\cite{ShashaSnirtoplas88,Burnimtestingstabilitytacas11,bouajjanirobustnessesop13,aglavestabilitycav11}}\\ \mbox{Persistence\cite{persistenceesop15}} \end{array}$}}
  % 	     }
	     child [missing] {}
	  }
        child [missing] {}
%         child [missing] {}
%        child [missing] {}
%        	child [missing] {}				
%        child [missing] {}	
       child { node {Context-bounded}
	  	child { node{$\begin{array}{l}\mbox{\cite{ridofstorebuffercav11}} \end{array}$}}
	}
	child [missing] {}				
     }
     	child [missing] {}				
     		child [missing] {}				
     			child [missing] {}				
%      				child [missing] {}	
%      				child [missing] {}
%      				child [missing] {}
%      									child [missing] {}
    child { node {Over-approximate}
      child { node{$\begin{array}{l}\mbox{\cite{AlglaveKNP14,programtransformationjadeesop13,coherenceabstractionpldi11}} \end{array}$}}
    };
%     child { node [selected] {tex}
%       child { node {generic}}
%       child { node [optional] {latex}}
%       child { node {plain}}
%     }
%     child [missing] {}				
%     child [missing] {}				
%     child [missing] {}				
%     child { node {texdoc}};
%   \node {RMM Verification}
%     child { node{Precise}
%       child { node{Safety Property}
% 	child { node{$\begin{array}{l}\mbox{Memorax\cite{memoraxtacas11}}\\ \mbox{Remmex\cite{remmextacas13}} \end{array}$}}
%       }
%       child { node{SC Property}
% 	child { node{$\begin{array}{l}\mbox{Robustness\cite{ShashaSnirtoplas88,Burnimtestingstabilitytacas11,bouajjanirobustnessesop13,aglavestabilitycav11}}\\ \mbox{Persistence\cite{persistenceesop15}} \end{array}$}}
%       }}
%     child { node {Under-approximate} 
%       child { node{Buffer Bound}
% 	child { node{$\begin{array}{l}\mbox{Memorax\cite{memoraxtacas11}}\\ \mbox{Remmex\cite{remmextacas13}} \end{array}$}}
%       }
%       child { node{Context Bound}
% 	child { node{$\begin{array}{l}\mbox{Robustness\cite{ShashaSnirtoplas88,Burnimtestingstabilitytacas11,bouajjanirobustnessesop13,aglavestabilitycav11}}\\ \mbox{Persistence\cite{persistenceesop15}} \end{array}$}}
%       }}
%     child { node{Ove-approximate}};
\end{tikzpicture}
}

\newcommand{\tsosemantics}{
\removelatexerror
 \scalebox{0.85}{\parbox{\linewidth}{%
 {
 \begin{minipage}{\linewidth}
\begin{minipage}{0.5\textwidth}
\begin{align}\tag{\textsc{BWrite}}\label{tag:bwrite}
\frac
{\begin{array}{c}\Inst(a)=(x:=e),~ \Adenot{e}{\LM[t]}=v,\\~\Buff'_k=\Buff_k[t\leftarrow \Buff_k(t).(sv,v)], \len{\Buff_k(t)}< k\end{array}}
{(\Q,\LM,\GM,\Buff_k)\stackrel{a}{\to_k}(\Q',\LM,\GM,\Buff'_k)}
\end{align}	
\end{minipage}
\hspace{2.1cm}
\begin{minipage}{0.5\textwidth}
\begin{align}\tag{\textsc{BRead}}\label{tag:bread}
\frac
{\begin{array}{c}\Inst(a)=(\ell:=x),~\LM'=\LM[(t,\ell)\leftarrow v]\\~ \restr{\Buff_k(t)}{\{x\}\times \Val}=\alpha.(x,v)\end{array}}
{(\Q,\LM,\GM,\Buff_k)\stackrel{a}{\to_k}(\Q',\LM',\GM,\Buff_k)}
\end{align}	
\end{minipage}
\end{minipage}
 \begin{minipage}{\linewidth}
\begin{minipage}{0.5\textwidth}
\begin{align}\tag{\textsc{MRead}}\label{tag:mread}
\frac
{\begin{array}{c}\Inst(a)=(\ell:=x),~\LM'=\LM[(t,\ell)\leftarrow \GM(x)], \\~ \restr{\Buff_k(t)}{\{x\}\times \Val}=\epsilon\end{array}}
{(\Q,\LM,\GM,\Buff_k)\stackrel{a}{\to_k}(\Q',\LM',\GM,\Buff_k)}
\end{align}	
\end{minipage}
\hspace{2.1cm}
\begin{minipage}{0.5\linewidth}
\begin{align}\tag{\textsc{LWrite}}\label{tag:mlwrite}
\frac
{\begin{array}{c}\Inst(a)=(\ell:=e),\Adenot{e}{\LM[t]}=v, \\~\LM'=\LM[(t,\ell)\leftarrow v]\end{array}}
{(\Q,\LM,\GM,\Buff_k)\stackrel{a}{\to_k}(\Q',\LM',\GM,\Buff_k)}
\end{align}	
\end{minipage}
\end{minipage}
\begin{minipage}{\linewidth}
\begin{minipage}{0.5\textwidth}
\begin{align}\tag{\textsc{Assume}}\label{tag:massume}
\frac
{\begin{array}{c}\Inst(a)=(\assume{e}),\Adenot{e}{\LM[t]}=\true\end{array}}
{(\Q,\LM,\GM,\Buff_k)\stackrel{a}{\to_k}(\Q',\LM,\GM,\Buff_k)}
\end{align}	
\end{minipage}
\hspace{2.1cm}
\begin{minipage}{0.5\linewidth}
\begin{align}\tag{\textsc{flush}}\label{tag:mflush}
\frac
{\begin{array}{c}\Buff_k(t)=(x,v).\alpha, \Buff'_k=\Buff_k[t\leftarrow \alpha]\end{array}}
{(\Q,\LM,\GM,\Buff_k)\stackrel{a}{\to_k}(\Q',\LM',\GM[sv\leftarrow v],\Buff'_k)}
\end{align}	
\end{minipage}
\end{minipage}
\begin{minipage}{\linewidth}
% \hspace{0.4cm}
\begin{center}
\begin{minipage}{0.5\textwidth}
\begin{align}\tag{\textsc{fence}}\label{tag:mfence}
\frac
{\begin{array}{c}\Inst(a)=(\fence), \Buff_k=\epsilon\end{array}}
{(\Q,\LM,\GM,\Buff_k)\stackrel{a}{\to_k}(\Q',\LM,\GM,\Buff_k)}
\end{align}	
\end{minipage}
\vspace{0.5cm}
\end{center}
\end{minipage}
\begin{minipage}{1.15\linewidth}
\caption{TSO semantics for \emph{k-bounded} buffers. All these rules
  assume transitions for process $t$, ie. $\Q[t]=q$, $(q,a,q')\in
  \delta_t$, $\Q'=\Q[t \leftarrow q']$, $\Adenot{e}{\LM[t]}$ denotes the  value of $e$ under the store $\LM[t]$. $\restr{\Buff_k(t)}{\{x\}\times \Val}$ denotes the buffer content of $\Proc{t}$ restricted to variable $x$.}
\label{fig:tsok}
\end{minipage}
}}
}
}

\newcommand{\TID}{\mathsf{TID}}
\newcommand{\vr}[1]{\mathrm{#1}}
\newcommand{\loc}{\mathsf{Loc}}
\newcommand{\Labl}{\mathsf{LABL}}
\newcommand{\Inst}{\mathsf{Ins}}
\newcommand{\instset}{\mathsf{INST}}
\newcommand{\St}{\mathsf{S}}
\newcommand{\st}{\mathsf{s}}
\newcommand{\Q}{\mathsf{cs}}
\newcommand{\LM}{\mathsf{Lm}}
\newcommand{\GM}{\mathsf{Gm}}
\newcommand{\TSO}{\mathsf{TSO}}
\newcommand{\Buff}{\mathsf{Buff}}
\newcommand{\Val}{\mathsf{Val}}
\newcommand{\Var}{\mathsf{Var}}
\newcommand{\Exp}{\mathsf{Exp}}
\newcommand{\Powerset}[1]{\mathbb{P}({#1})}
\newcommand{\Elem}[1]{\mathcal{EL}(#1)}
\newcommand{\Assign}[2]{\mathrm{#1}\mathbf{:=}\mathrm{#2}}
\newcommand{\subst}[3]{\mathtt{\linebreak[0]#1[\linebreak[0]{#2}/\linebreak[0]{#3}]}}
\newcommand{\w}[2]{!({{#1}},{#2})}
\renewcommand{\r}[2]{?({{#1}},{#2})}
\newcommand{\inarr}[1]{\begin{array}{@{}l@{}}#1\end{array}}
\newcommand{\lab}[1]{{\mathtt{#1.}}~~}
\newcommand{\true}{\mathsf{true}}
\newcommand{\false}{\mathsf{false}}
\newcommand{\set}[1]{\{#1\}}
\newcommand{\restr}[2]{#1\downharpoonleft_{#2}}
\newcommand{\Last}{\mathbf{Last}}
\newcommand{\lland}{~\land~}
\newcommand{\llor}{~\lor~}
\newcommand{\llnot}{\lnot~}
\newcommand{\limp}{\Rightarrow}
\newcommand{\aut}[1]{\mathcal{A}(#1)}
\newcommand{\htt}[1]{\widehat{#1}}
\newcommand{\Lab}{\mathtt{Lab}}
\newcommand{\ndelta}{\mathcal{\delta}}
\newcommand{\unsatcore}{\mathtt{Unsatcore}}
\newcommand{\validcore}{\mathtt{validcore}}
\newcommand{\A}[1]{\mathcal{A}(#1)}
\newcommand{\len}[1]{|{#1}|}
\newcommand{\HOARE}[3]{\colorbox{white}{${\{{#1}\}~ \linebreak[0] #2 ~\linebreak[0]\{{#3}\}}$}}
\newcommand{\word}[1]{\mathtt{``#1"}}
\newcommand{\tafap}{B_{(\sigma, \neg \phi)}}
\newcommand{\bld}[1]{\mathbf{#1}}
\newcommand{\lang}[1]{\mathcal{L}(#1)}
\newcommand{\HMap}{\mathtt{HMap}}
\newcommand{\acc}{\mathtt{acc}}
\newcommand{\faS}{S_{\forall}}
\newcommand{\teS}{S_{\exists}}
\newcommand{\AssnMap}{\mathtt{AMap}}
\newcommand{\ResdMap}{\mathtt{RMap}}
\newcommand{\rev}{\mathsf{rev}}
\newcommand{\rf}{\mathrm{rf}}
\newcommand{\run}{\sigma}
\newcommand{\PH}{\mathsf{PH}}
\newcommand{\prog}[6]{\langle #1,\linebreak[0] #2,\linebreak[0] #3,\linebreak[0] #4, \linebreak[0]#5, \linebreak[0]#6\rangle}
\newcommand{\Proc}[1]{P_{#1}}
\newcommand{\OP}{\mathrm{op}}
\newcommand{\lcas}[3]{\mathtt{lock}({#1})}
\newcommand{\assume}[1]{\mathtt{assume}(\mathrm{#1})}
\newcommand{\assrt}[1]{\mathtt{assert}(\mathrm{#1})}
\newcommand{\SV}{\mathtt{SV}}
\newcommand{\LV}{\mathtt{LV}}
\newcommand{\p}{\mathit{p}}
\newcommand{\nop}{\mathtt{nop}}
\newcommand{\partialfun}{\hookrightarrow}
\newcommand{\AssnM}{\mathrm{Assrn}}
\newcommand{\BExp}{\mathtt{BExp}}
\newcommand{\succset}{\mathtt{succ}}
\newcommand{\track}[1]{\colorbox{yellow}{\textcolor{red}{#1}}}
\newcommand{\defeq}{\stackrel{def}{=}}
\newcommand{\afa}[1]{\mathcal{\hat{A}}_{#1}}
\newcommand{\lb}[1]{\mathtt{#1}}
\newcommand{\Adenot}[2]{\llbracket{#1}\rrbracket_{#2}}
\newcommand{\TSOr}[1]{\mathsf{TSO^{\sharp}}_{#1}}
\newcommand{\Str}{\mathsf{S}{^{\sharp}}}
\newcommand{\str}{\mathsf{s}{^{\sharp}}}
\newcommand{\tor}[1]{\Rightarrow_{#1}}
\newcommand{\Buffr}{{\mathsf{Buff}^{\sharp}}}
\newcommand{\LI}{\mathsf{Li}}
\newcommand{\sigmar}{\sigma^{\sharp}}
\newcommand{\flush}{\mathsf{flush}}
\newcommand{\FV}{\mathsf{FV}}
\newcommand{\SCI}{\mathsf{SCI}}
\newcommand{\und}{\mathsf{Undef}}
\newcommand{\GMr}{\GM{^{\sharp}}}
\newcommand{\LMr}{\LM{^{\sharp}}}
\newcommand{\AMap}{\mathsf{AMap}}
\newcommand{\Assrn}{\mathsf{Assrn}}
\newcommand{\comp}[2]{#1 \! \circ \! #2}

\maketitle
% My pgf version is: \pgfversion
% \scalebox{0.6}{
% \stateofart}

\begin{abstract}
  We address the problem of efficient verification of multi-threaded
  programs running over Total Store Order (TSO) memory model. It has
  been shown that even with finite data domain programs, the
  complexity of control state reachability under TSO is non-primitive
  recursive. In this paper, we first present a bounded-buffer
  verification approach wherein a bound on the size of buffers is
  placed; verification is performed incrementally by increasing the
  size of the buffer with each iteration of the verification procedure
  until the said bound is reached.  For programs operating on finite
  data domains, we also demonstrate the existence of a buffer bound
  $k$ such that if the program is safe under that bound, then it is
  also safe for unbounded buffers. 
  % This theoretical result can have
  % significant practical implications when $k$ is small; it allows us
  % to use incremental bounded verification for complete reachability
  % checking by reducing the verification problem to only verifying upto
  % the bound $k$.  
  We have implemented this technique in a tool \texttt{ProofTraPar}.
  % that
  % uses the trace partitioning based
  % verification approaches of \cite{Farzan:2013:IDF:2429069.2429086}
  % and \cite{DBLP:journals/corr/NarayanGA15}.
  Our results against \texttt{memorax} \cite{memoraxtacas11}, a
  state-of-the-art sound and complete verifier for TSO memory model,
  have been encouraging.
  %While SC memory model appears to be an  attractive choice
% since it is intuitive to developers, the fact of the matter is that almost all 
 %processors of today are built on relaxed memory semantics.
\end{abstract}

\section{Introduction}\label{sec:intro}
The explosion in the number of schedules is central to the complexity
of verifying the safety and correctness of concurrent programs.
%Central to the complexity of verifying concurrent programs to be
%safe or correct is the explosion in the number of schedules.
%%It involves checking the implementation (program) against
%% a given specification (safety or liveness properties).
There exist a plethora of approaches in the literature that explore
ways and means to address the schedule-space explosion problem; 
incidentally, many of the the published techniques operate
over the assumption of a sequentially consistent (SC) memory model. In
contrast, almost all modern multi-core processors conform to memory
models \emph{weaker} than SC. A program executing on a relaxed memory
model exhibits more behaviours than on the SC memory model. As a
result, a program declared correct by a verification methodology that
assumes SC memory model can possibly contain a buggy behaviour when
executed on a relaxed memory model. 

Consider x86 machines that conform to TSO (Total Store Ordering).  The
compiler or the runtime system of the program under the TSO memory
model is allowed to reorder a read following a write (read and writes
are to different variables) within a process, i.e. break the
program order specified by the developer. Operationally, such a
re-ordering is achieved by maintaining per-process store
buffers. Write operations issued by a process/thread are enqueued in
the store buffer local to that process.  The buffered writes are later
flushed (from the buffer) into the global memory. The point in time
when flushes take place is deterministically known only when the store
buffers are full. When the store buffers are partially full, flushes
are allowed to take place non-deterministically. Therefore, when a
read operation of variable $\vr{x}$, is executed by a process, the
process first checks whether there is a recent write to $\vr{x}$ in
the process's store buffer. If such a write exists then the value from
store buffer is returned, otherwise the value is read from the global
memory.

\begin{wrapfigure}{l}{0.6\textwidth}
%\begin{figure}[h]
%  
% \vspace{-0.5cm} 
\removelatexerror
\scalebox{0.75}{\parbox{\linewidth}{%
    \centering{$\mathrm{flag_1}=\false,\mathrm{flag_2}=\false,\mathrm{t}=0$;
      \normalsize}\\ $\begin{array}{l@{~~}|@{~~}l} \footnotesize{
        \inarr{ {}\\ ~~~~~~
          P_{1}~~~~~\\ \mathrm{While}(\true)\{\\ \lab{1}
          \Assign{flag_1}{\true};\\ \lab{2} \Assign{t}{2};\\ \lab{3}
          \mathrm{while}(\mathrm{flag}_2=\mathrm{true} \,\&\,
          \mathrm{t}=2);\\ \lab{4} \bld{ //Critical~Section}\\ \lab{5}
          \Assign{flag_1}{\false};\\ \} } }& \footnotesize{ \inarr{
          {}\\ ~~~~~~ P_{2}~~~~~\\ \mathrm{While}(\true)\{\\ \lab{6}
          \Assign{flag_2}{\true};\\ \lab{7} \Assign{t}{1};\\ \lab{8}
          \mathrm{while}(\mathrm{flag_1}=\mathrm{true} \,\&\,
          \mathrm{t}= 1);\\ \lab{9}
          \bld{//Critical~Section}\\ \lab{10} \Assign{flag_2}{\false};
          \\ \} } }
\end{array}$
\caption{Peterson's algorithm for two processes }
\label{fig:pet}
}}
% \vspace{-0.5em}
\end{wrapfigure}
Figure \ref{fig:pet} shows Peterson's algorithm as an instance of a
correct program under SC semantics but which can fail when executed under
TSO.%%  to understand the
%% affect of relaxed memory model.
In this algorithm, two processes $P_{1}$
 and $P_{2}$ coordinate their access to their respective critical sections using a shared variable
$\mathrm{t}$. This algorithm satisfies the mutual exclusion property under
the SC memory model, i.e. both processes can not be simultaneously
present in their critical sections. %% Let us discuss the
%% affect of relaxed memory model on the correctness of this property. We
%% assume the underlying memory model as Total Store Order (TSO).
The property however, is violated when the same algorithm
is executed with a weaker memory model, such as TSO.
%Applying the afore-mentioned semantics on Peterson's algorithm it 
%follows that: 
Consider the following execution under TSO. The write operations at
$\lb{1}$, $\lb{2}$, $\lb{6}$ and $\lb{7}$ from processes $\Proc{1}$
and $\Proc{2}$ are stored in store buffers and are yet to be reflected
in the global memory. The reads at control locations $\lb{3}$ and
$\lb{8}$ will return initial values, thereby violating the mutual
exclusion property.

One can avoid such erroneous behaviors and restore the SC semantics of
the program by inserting special instructions, called \emph{memory
  fence}, at chosen control locations in the program.  A memory fence
ensures that the store buffer of the process (which executes the fence
instruction) is flushed entirely before proceeding to the next
instruction for execution. In the  example, when
\emph{fence} instructions are placed after $\Assign{flag_1}{\true}$ in
$\Proc{1}$ and after $\Assign{flag_2}{\true}$ in $\Proc{2}$, the
%SC behavior of the program is restored leading to the preservation of
mutual exclusion property is restored.

Safety verification under TSO is a hard problem even in the case
of finite data domain programs. The main reason for this 
complexity is the unboundedness
of store buffers. A program can be proved correct under TSO only when
the non-reachability of the error location is shown irrespective of
the bound on the buffers.  The work in
\cite{Atig:2010:VPW:1707801.1706303} demonstrated the equivalence of
the TSO-reachability problem to the coverability problem of lossy
channel machines which is decidable and of non-primitive recursive
complexity. A natural question is to ask if it is possible to have a
buffer bound $k$ such that if a finite data domain program is safe
under the $k$-bounded TSO semantics then it is guaranteed to be safe
even with unbounded buffers. For programs without loops such a
statement seems to hold intuitively. For programs with loops, it is
possible that a write instruction inside a loop keeps filling 
the buffer with values without ever getting them flushed to the main
memory. However, for finite data domain programs, only a finite set of
different values will be present in this unbounded buffer and this
leads to a sufficient bound on the buffer size.

In this paper we show that it is possible to verify a program $P$
under TSO (with unbounded buffers) by generalizing the bounded buffer
verification. Towards this we first define $\TSO_k$, TSO semantics
with buffer size $k$, and then characterize a bound $k_0$ such that if
a program is safe in $\TSO_{k_0}$ then it is safe for any buffer bound
greater than $k_0$. We adapt a recently proposed trace
partitioning based approach
\cite{Farzan:2013:IDF:2429069.2429086,DBLP:journals/corr/NarayanGA15}
for the TSO memory model. These methods work for the SC memory model
as follows: the set of all SC executions of a program $P$ are
partitioned in a set of equivalence classes such that it is sufficient
to prove the correctness of only one execution per equivalence
class. As this trace partitioning approach works with symbolic
executions, we first define an equivalent TSO semantics to generate a
set of symbolic TSO traces. Subsequently we invoke a trace
partitioning tool \texttt{ProofTraPar}
\cite{DBLP:journals/corr/NarayanGA15} for proving the correctness of
these traces. Note that the set of behaviors of a program $P$ under
$\TSO_k$ is a subset of the behaviors of $P$ under $\TSO_{k'}$ for any
$k'>k$. The trace partitioning approach allows us to reuse
the proof of correctness of $P$ with buffer bound $k$ in the proof of
correctness of $P$ with any buffer bound greater than $k$. In a
nutshell, the main contributions of this work are:
\begin{itemize}

 \item We characterize a buffer bound in case of finite state programs
   such that if the program is correct  under TSO  up to
   that bound then it is correct for unbounded buffers as well.   
 \item We adapt the recently proposed trace partition based proof
   strategy of SC verification
   \cite{Farzan:2013:IDF:2429069.2429086,DBLP:journals/corr/NarayanGA15}
   for TSO  by defining an equivalent TSO semantics to
   generate a set of symbolic TSO traces.
%  Trace partitioning based strategy allows us to reuse the proof constructed for a program with buffer size $k$ in the proof of the same program with buffer size greater than $k$.
 \item We implement our approach in a tool,
   \texttt{ProofTraPar}\cite{DBLP:journals/corr/NarayanGA15}, and
   compare its performance against
   \texttt{memorax}\cite{memoraxtacas11}, a sound and complete
   verifier for safety properties under TSO. We perform
   competitively in terms of time as well as space.
   In a few examples, \texttt{memorax} timed out after
   consuming around 6GB of RAM whereas our approach could analyze the
   program in less than 100 MB memory.
\end{itemize}

Section \ref{sec:related} covers the related work in the area of
verification under relaxed memory models. Section \ref{sec:prelim}
covers the notations used in this paper. Section \ref{sec:bound} shows
the necessary and sufficient conditions to generalize bounded
verification to unbounded buffers for finite data domain
programs. Section \ref{sec:tracepart} presents an equivalent TSO
semantics to generate a set of symbolic traces which can be used by the
trace partitioning tool \texttt{ProofTraPar} to check the
correctness under a buffer bound. This section ends with an approach
based on \emph{critical cycle} to insert \emph{memory fence}
instructions. Section \ref{sec:experiment} compares the performance of
our approach with \texttt{memorax}. Section \ref{sec:conclusion}
concludes with future directions. 
%Even though we only
%consider TSO memory model in this paper but our approach can be easily
%extended to PSO (Partial Store Order) memory model as well.
% 
% first restricting the buffer size in TSO semantics and then incrementing it until a \emph{condition} holds which guarantees the safety of 
% 
% we can start with some fixed buffer bound, say $k$, and verify the given program with bounded buffer TSO semantics
% 
% In this paper we take a different look at this problem of safety verification under the TSO memory model. 
% % ranges from non-primitive recursive (TSO and PSO ) to undecidable (SPARC, ALPHA,\cite{DBLP:conf/esop/AtigBBM12})
% 
% As illustrated by Peterson's example, the memory model specific verification of concurrent programs is well justified. The complexity of the state reachability problem under relaxed memory models . Since last few years the research community is working on several variations of this problem for tractability. 

\section{Related Work}\label{sec:related}
Figure \ref{fig:related} captures the related work in this
area. Verification approaches for relaxed memory models can be broadly
divided into three classes: precise, under-approximate and
over-approximate. For finite state programs, the work in
\cite{Atig:2010:VPW:1707801.1706303,memoraxtacas11}  present sound
and complete algorithms for control state reachability (finite state
programs) under TSO and PSO memory models.

% \begin{wrapfigure}{l}{0.4\textwidth}
\begin{wrapfigure}{l}{0.5\textwidth}
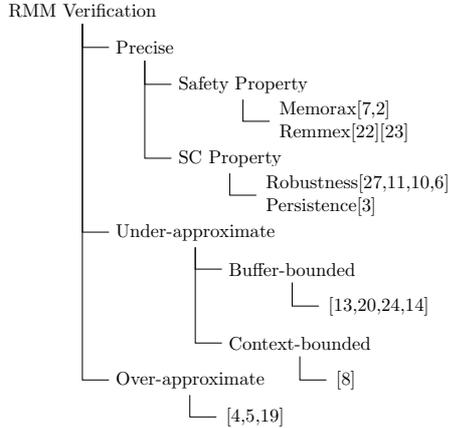
 \removelatexerror
\scalebox{0.7}{\parbox{\linewidth}{\stateofart
 \caption{State of the art}
 \label{fig:related} }} 
% \vspace{-0.5cm}
\end{wrapfigure}
% \end{wrapfigure} The algorithms do not assume any bound on the
%buffers and are shown to be of non-primitive recursive
%complexity. 
Sets of infinite configurations, arising from unbounded
buffer size, are finitely presented using regular
expressions. Acceleration based techniques that led to faster
convergence in the presence of loops were presented in
\cite{DBLP:conf/spin/LindenW11,remmextacas13}. However, the
termination of the algorithm was not guaranteed. Notice that in both
\cite{memoraxtacas11} and \cite{remmextacas13} the specification was a
set of control states to be avoided. One can also ask the state
reachability question with respect to SC specification, i.e. does a
program $P$ reach only SC reachable states under a relaxed memory
model? This problem was shown to be of the same complexity as of SC
verification (Pspace-complete) and hence gave a more tractable
correctness criterion than general state reachability
problem. \cite{ShashaSnirtoplas88,Burnimtestingstabilitytacas11,bouajjanirobustnessesop13,aglavestabilitycav11,persistenceesop15}
work with this notion of correctness and give efficient algorithms to
handle a range of memory models. In this paper we work with the
control state reachability problem as opposed to the SC state
reachability problem.

Over-approximate analyses
\cite{AlglaveKNP14,programtransformationjadeesop13,coherenceabstractionpldi11}
trade precision with efficiency and construct an over-approximate set
of reachable states. Recently
\cite{DBLP:conf/tacas/AbdullaAAJLS15,Zhang:2015:DPO:2737924.2737956}
used stateless model checking under TSO and PSO memory models. The
main focus of these approaches are in finding bugs rather than proving
programs correct. Another line of work to make the state reachability
problem more tractable involved either restricting the size of buffers
\cite{predicateabstractionsas13,inferencefences13,refinementpropogationsas14,effectiveabstractionvmcai15}
or bounding the context switches \cite{ridofstorebuffercav11} among
threads. None of theses methods give completeness guarantee even for
the finite data domain programs.
% In these approaches a set of states were considered as a specification. These approaches are sound for finding bugs in the program.
% \cite{predicateabstractionsas13,inferencefences13,refinementpropogationsas14,effectiveabstractionvmcai15}
% convert the input program $P$, for a fixed buffer bound $k$, to
% another program $P'$ such that the reachability problem in $P$ under
% TSO/PSO with buffer size as $k$ is reduced to the reachability problem
% in $P'$ under the SC memory model. 
% For finite state programs a natural question to ask is if such a method can also be used to give a complete method for state reachability under TSO/PSO. Ideally, one would like to start with a smaller buffer bound, analyze the program and increase the bound until it either reaches an error state or a buffer bound is reached such that any increase in the buffer size after that bound is guaranteed not to create any new reachable state. 
% Wherease in this paper we use bounded buffer verification to show the completenss of control state reachability problem under TSO for finite data programs. 
% Further, none of these methods reuse the correctness proof generated by the off-the-shelf SC analyzer for a buffer bound say $k$ for higher buffer bounds. 
% It is strange because the set of states of a program $P$ under TSO/PSO memory model and a given buffer bound $k$ is included in the set of states of $P$ with any buffer bound larger than $k$. We also want to exploit this fact in our approach. 

\section{Preliminary}\label{sec:prelim}

% \begin{figure*}[!t] % \tsosemantics
% \end{figure*} 
A concurrent program is a set of processes uniquely
identified by indices $t$ from the set $\TID$. As in
\cite{memoraxtacas11,DBLP:conf/fase/BouajjaniCDM15}, a process
$\Proc{t}$ is specified as an automaton $\langle
Q_t,\Labl_t,\delta_t,q_{0,t}\rangle$. Here $Q_t$ is a finite set of
control states, $\delta_t \subseteq Q_t \times \Labl_t \times Q_t$ is
a transition relation and $q_{0,t}$ is the initial state. Without loss
of generality we assume every transition is labeled with a different
symbol from $\Labl_t$.  $\Labl_t$ represents a finite set of labels to
symbolically represent the instructions of the program.  Let $\SV$ be
the set of shared variables of program $P$ ranged over by $x,y,z$,
$\Val$ be a finite set of constants ranged over by $v$, $\LV_t$ be the
set of local variables of process $\Proc{t}$ ranged over by $\ell,m$,
and $\Exp_t$ be the set of expressions constructed using $\LV_t$,
$\Val$ and appropriate operators. Let $\LV=\bigcup_t \LV_t$,
$\Exp=\bigcup_t \Exp_t$, and $\Labl=\bigcup_t\Labl_t$. Let
$\lb{a},\lb{b},\lb{c}$ range over $\Labl$ and and $e$ range over
$\Exp$. Formally an instruction, from set $\instset$, is one of the
following type; (i) $\Assign{x}{e}$, (ii) $\Assign{\ell}{x}$, (iii)
$\Assign{\ell}{e}$, (iv) $\assume{e}$, and (v) $\fence$, where
$\vr{x}\in \SV$, $\ell\in \LV$ and $\mathrm{e}\in \Exp$.  A function
$\Inst:\Labl\to \instset$ assigns an instruction to every label.  
%For brevity of the presentation we only include one type of $\fence$
%instruction that represents $\mathsf{mfence}$ operation of TSO
%architecture. 
%Towards the end of this paper we will discuss the
%changes needed to extend this approach to PSO (Partial State Order)
%memory model.

The first two assignment instructions, (i) and (ii), are the write and
the read operations of shared variables, respectively. Instruction
(iii) assigns the value of an expression (constructed from local
variables and constants) to a local variable, hence, does not
include any shared memory operation. Instruction (iv) is used to model
loop and conditional statements of the program. Note that the boolean
expression $e$ in $\assume{e}$ does not contain any shared
variable. Instruction (v) represents the fence operation provided by
the TSO architecture. Let $\loc(\lb{a})$ be the shared variable used
in $\Inst(\lb{a})$. For a function $\mathsf{F:A\times B}$, let the
function $\mathsf{F}[p\leftarrow q]$ be the same as $\mathsf{F}$
everywhere except at $p$ where it maps to $q$. 

\paragraph{TSO Semantics} In the TSO memory model, every process has a
buffer of unbounded capacity. However, we present the TSO semantics by
first defining a \emph{k-bounded} TSO semantics where all buffers are
of fixed size $k$. For a concurrent program $P$, the \emph{k-bounded}
semantics is given by a transition system $\TSO_k=\langle
\St,\to_k,\st_0 \rangle$. Every state $\st\in \St$ is of the form
$(\Q,\LM,\GM,\Buff_k)$ where process control states $\Q:\TID\to Q$,
$Q=\bigcup_t Q_t$, local memory $\LM:\TID \times \LV \to \Val$, global
memory $\GM:\SV \to \Val$, and \emph{k-length} bounded buffers
$\Buff_k:\TID\to (\SV\times \Val)^k$.  
We overload operator `.' to denote the concatenation of labels as well 
as a dereferencing operator to identify a specific field inside a state.
Therefore, for a state $s$, $s.\GM$, $s.\LM$ and $s.\Buff_k$ denote the 
functions representing global memory, local memory, and buffers respectively. 
%We can obtain TSO semantics by
%setting the buffer bound $k$ to inf in $\TSO_k$. This involves
%setting $k$ as $*$ in the definition of $\Buff_k$. The reason to
%define TSO semantics by first defining \emph{k-bounded} TSO semantics
%will be clear soon in the next section.
Every write operation to a shared variable by process $\Proc{t}$ initially
gets stored in the process-local buffer provided that the buffer has less than
\emph{k} (buffer-bound) elements. This write operation is
later removed from the buffer non-deterministically to update the
global memory. A read operation of a shared
variable say $\vr{x}$, by a process $\Proc{t}$ first checks the local
buffer for any write to $\vr{x}$. If buffer contains any write to
$\vr{x}$ then the value of the last write to $\vr{x}$ is returned as a
result of this read operation. If no such write is
present in the buffer of $\Proc{t}$ then the value is read from the
global memory. A process executes instruction
$\fence$ only when its local buffer is empty. For
instruction $\assume{e}$, boolean expression $\mathrm{e}$ is evaluated
in the local state of $\Proc{t}$. Execution proceeds only when the
expression $\mathrm{e}$ evaluates to $\true$. 
Assignment operation involving only local
variables changes the local memory of $\Proc{t}$. 
The transition relation $\to_k$ is defined in
detail in the Appendix.  
% Every write operation to a shared variable by process $\Proc{t}$ first
% gets stored in its buffer provided that the buffer has less than
% \emph{k} (buffer-bound) elements (\ref{tag:bwrite}). This write is
% later removed from the buffer non-deterministically to update the
% global memory (\ref{tag:mflush}). A read operation of a shared
% variable say $\vr{x}$, by a process $\Proc{t}$ first checks the local
% buffer for any write to $\vr{x}$. If buffer contains any write to
% $\vr{x}$ then the value of the last write to $\vr{x}$ is returned as a
% result of this read operation (\ref{tag:bread}). If no such write is
% present in the buffer of $\Proc{t}$ then the value is read from the
% global memory (\ref{tag:mread}). A process executes instruction
% $\fence$ only when its local buffer is empty (\ref{tag:mfence}). For
% instruction $\assume{e}$, boolean expression $\mathrm{e}$ is evaluated
% in the local state of $\Proc{t}$. Execution proceeds only when the
% expression $\mathrm{e}$ evaluates to $\true$
% (\ref{tag:massume}). Assignment operation involving only local
% variables changes the local memory of $\Proc{t}$
% (\ref{tag:mlwrite}). The transition relation $\to_k$ is defined in
% detail in the Appendix.  

% Figure \ref{fig:tsok}. $\TSO_\infty$ is the
%TSO semantics for unbounded buffers.in 
\vspace{-0.1cm}
\paragraph{Relevance of the buffer size $k$} Parameter $k$ influences
the extent of reordering that happens in an execution. For example, if
$k=0$ then no reordering happens and the set of executions is the same
as under the SC memory model. Size parameter $k$, under the TSO memory
model, allows any two instructions separated by at most $k$
instructions to be reordered, provided that one is write and another
is a read instruction. This \emph{reorder-bounded analysis} was also
shown effective by \cite{DBLP:conf/fm/0001K15} and seems a natural way
to make this problem tractable.

\newcommand{\dom}[1]{\mathsf{D}_{#1}} \newcommand{\LW}{\mathsf{LW}}
\newcommand{\lbr}{\linebreak[1]}
\newcommand{\Exec}[1]{\mathsf{Exec}({\linebreak[0]#1})}
\newcommand{\nrestr}[2]{{#1{_{\downharpoonleft{_{#2}}}}}}
\newcommand{\ran}{\mathsf{Range}} \newcommand{\eq}{\mathsf{eq}}
\newcommand{\concat}{.}  % \newpage
\section{Unbounded Buffer Analysis}\label{sec:bound} 
% In this 
% subsection we would like to use our SC translation to prove the
% decidability of state reachability problem for finite state programs
% and unbounded buffers.  We may obtain the TSO transition system by
% setting the buffer bound $k$ to infinity in the transition system
% $\TSO_k$. 
In this section we show that for any finite data domain
program and safety property $\phi$, there exists a buffer size $k_0$
such that it is sufficient to prove $\phi$ for all buffers up-to size
$k_0$. Note that for programs with write instructions inside loops, it
is possible to keep on writing to the buffer without flushing them to
the main memory. However since the data domain is finite, such
instruction are guaranteed to repeatedly write the \emph{same set of
values} to the buffer. It is this repetition that guarantees the
existence of a sufficient bound on the buffer.

The set of states in $\TSO_k$ are monotonic with respect to the buffer
bound, i.e. $\St_{k}\subseteq \St_{k'}$, for all $k\leq k'$. Let
$\nrestr{\st}{(\Q,\GM,\LM,\Buff_{lst})}$ denote the restriction of a
state in $\St$ to only control states, global memory, local memory,
and last writes (if any) to shared variables in buffers. Let
$\nrestr{\St{_k}}{(\Q,\GM,\LM,\Buff_{lst})}=\{\nrestr{\st}{\Q,\GM,\LM,\Buff_{lst}}\mid
\st\in \St_k\}$ be the states of $\St_k$ after projecting out the
above information. For finite data domain the set
$\bigcup_{k=0}^\infty \nrestr{\St_k}{\Q,\GM,\LM,\Buff_{lst}}$ is
finite because only finitely many different possibilities exist for
functions $\Q$, $\GM$, $\LM$ and $\Buff_{lst}$. Further,
$\nrestr{\St_k}{\Q,\GM,\LM,\Buff_{lst}} \subseteq
\nrestr{\St_{k+1}}{\Q,\GM,\LM,\Buff_{lst}}$. Therefore there exists a
$k_0$ such that $\nrestr{\St_{k_0}}{\Q,\GM,\LM,\Buff_{lst}}$ is equal
to the set $\nrestr{\St_{k_0+1}}{\Q,\GM,\LM,\Buff_{lst}}$. In this
section we show that for every $k>k_0$, sets
$\nrestr{\St_{k}}{\Q,\GM,\LM,\Buff_{lst}}$ and
$\nrestr{\St_{k+1}}{\Q,\GM,\LM,\Buff_{lst}}$ are equal and hence we
can stop the analysis at $k_0$.  
% For a program $P$ and corresponding
%transition system $\TSOr{k}=(\Str_k,\tor{k},\str_0)$

For a buffer $\Buff(t)$, let $\sigma_{\Buff(t).lst}$ denote the
sequence of last writes to shared variables in buffer $\Buff(t)$. Let
$\Exec{\GM,\LM(t),\Buff(t),\sigma.a.\sigma',(x,v),\LM'(t)}$ be a
predicate, where $a=(x:=e)$, that holds true iff (i) after executing
sequence $\sigma_{\Buff(t).lst}.\sigma.a.\sigma'$ from the global
memory $\GM$ and local memory $\LM(t)$ the local memory of process $t$
is $\LM'(t)$ and (ii) in the same sequence the value of expression $e$
in write instruction $x:=e$ at label $a$ is $v$. The following two
lemmas relate the states of $\TSO_k$ and $\TSO_{k+1}$ transition
systems. We use $\st_0\stackrel{\sigma}{\to} \st$ to denote a sequence
of transitions over a sequence $\sigma$ of labels.

\begin{lemma}\label{lem:relatektok+1} For all $n$, $\sigma$, $\st\in
\St_{k+1}$ such that $\st_0\stackrel{\sigma}{\to} \st$,
$\len{\sigma}=n$ and $k\ge 0$, there exists a state $\st'\in \St_{k}$
such that $\st'.\GM=\st.\GM$ and for all $t\in \TID$,
 \begin{enumerate}
 \item \label{lem:p1} $(\len{\st.\Buff_{k+1}(t)}=k+1) \limp\exists
x,v.~$ \\$\left\{\begin{array}{l}(i)\
\st.\Buff_{k+1}(t)=\st'.\Buff_k(t)\concat(x,v)\\ (ii)\ \exists \mbox{
$q_t,q_{t'},\sigma',\sigma''$ such that } (q_t,a,q_{t'})\in \delta_t
\mbox {,} \\ \st'.\Q(t)\stackrel{\sigma'}{\to}q_t,
q_{t'}.\Q(t)\stackrel{\sigma''}{\to}\st.\Q(t),\\ \mbox{$\sigma'$ and
$\sigma''$ do not modify the buffer of }\Proc{t}, \mbox{ and }\\
\Exec{\st'.\GM,\st'.\LM(t),\st'.\Buff_k(t),\sigma'.a.\sigma'',(x,v),\st.\LM(t)}
 						\end{array}\right.$\\
and
   \item \label{lem:p2} $(\len{\st.\Buff_{k+1}(t)}<k+1) \limp$\\
\hspace{1cm}$\left\{\begin{array}{l}(i)\
\st.\Buff_{k+1}(t)=\st'.\Buff_k(t),\\ (ii)\ \st.\LM(t)=\st'.\LM(t),
\mbox{ and }\\(iii)\ \st.\Q(t)=\st'.\Q(t)
 						\end{array}\right.$
 						
  \end{enumerate}
 \end{lemma}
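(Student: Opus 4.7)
The plan is to prove the lemma by induction on the length $n$ of the trace $\sigma$. The statement is really a simulation claim: every state reachable in $\TSO_{k+1}$ admits a ``shadow'' state in $\TSO_k$ in which each full $(k+1)$-buffer lags behind by exactly one pending \textsc{BWrite} (part 1), while each non-saturated buffer is in lockstep (part 2). The base case $n=0$ is immediate: take $\st' = \st_0 \in \St_k$; all buffers are empty, every thread satisfies part 2, and $\st'.\GM = \st.\GM$. For the inductive step, I split $\sigma = \sigma_1 \cdot a$, let $\st_0 \stackrel{\sigma_1}{\to} \st_1 \stackrel{a}{\to} \st$, and apply the IH to obtain $\st'_1 \in \St_k$ matching $\st_1$. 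I then construct $\st'$ from $\st'_1$ by case analysis on the rule for $a$ and on whether part 1 or part 2 holds for the acting thread $t$.

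Most cases are routine. (i) \textsc{BWrite} by $t$ with part 2 and $\len{\st'_1.\Buff_k(t)} < k$: mirror the write in $\TSO_k$, part 2 is preserved. (ii) \textsc{BWrite} by $t$ with part 2 but $\len{\st'_1.\Buff_k(t)} = k$: leave $\st' = \st'_1$; part 1 is then witnessed by $\sigma' = \sigma'' = \epsilon$ and by the \textsc{BWrite} transition itself, and the $\Exec$ predicate holds because local memories agree and $e$ evaluates identically on both sides. (iii) Non-buffer-modifying actions (\textsc{BRead}, \textsc{MRead}, \textsc{LWrite}, \textsc{Assume}, \textsc{fence}) by $t$ under part 2: mirror verbatim. (iv) Actions by threads $t' \neq t$: their IH obligations transfer directly, since $\st$ and $\st'$ can differ from $\st_1$ and $\st'_1$ only on $t$'s component and, for flushes, on $\GM$. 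For \textsc{BRead}/\textsc{MRead} by $t$ under part 1, I case-split on whether the read variable $x$ equals the shadow variable $x_*$: when $x \neq x_*$, $\restr{\st_1.\Buff_{k+1}(t)}{\{x\}\times\Val} = \restr{\st'_1.\Buff_k(t)}{\{x\}\times\Val}$, so the action is mirrored; when $x = x_*$, I keep $\st' = \st'_1$ and extend the IH-supplied $\sigma''$ by $a$, noting that in the reconstructed catch-up trajectory the preceding $a_{\text{write}}$ places $(x_*,v_*)$ at the buffer tail, so the $\Exec$ predicate still yields the correct read value and local memory.

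The main obstacle is the \textsc{flush} by $t$ under part 1. Flushing the head of $t$'s buffer shrinks $\st.\Buff_{k+1}(t)$ from $k+1$ to $k$, so the size hypothesis of part 1 is no longer met; yet merely mirroring the \textsc{flush} in $\TSO_k$ leaves $\st'.\Buff_k(t)$ still missing the shadow $(x_*,v_*)$, so part 2 also fails. The fix is to \emph{catch up} in $\TSO_k$: mirror the \textsc{flush}, then execute the IH-supplied $\sigma'$, the skipped write $a_{\text{write}}$, and $\sigma''$ from the resulting state, landing in a configuration where part 2 holds for $t$. Establishing that this catch-up remains enabled after the flush is the delicate step and will rest on two facts: $\sigma'$ and $\sigma''$ are buffer-non-modifying by construction, and the $\Exec$ predicate from the IH already certifies precisely this execution — the flushed head is the oldest pending write in both machines, so every read issued during $\sigma''$ after catch-up observes the same buffer and $\GM$ values it observed in the $\TSO_{k+1}$ trace. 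Producing a clean commutation invariant between head-flushes and buffer-free suffixes, strong enough to propagate through further inductive steps, will therefore be the technical heart of the argument.
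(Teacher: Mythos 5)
Your overall strategy---induction on $\len{\sigma}$, maintaining for each thread either lockstep (your part~2 cases) or a one-write lag witnessed by $\sigma'.a.\sigma''$ (your part~1 cases)---is the natural way to read the lemma, and your handling of the \textsc{BWrite} and purely local cases is fine. The genuine gap is in your case~(iv), which you dispose of with ``their IH obligations transfer directly \ldots\ for flushes, on $\GM$.'' They do not transfer. When some thread $u\neq t$ performs a \textsc{flush} while $t$ is in a part-1 configuration, $\GM$ changes, and the obligation you must re-establish for $t$ is $\Exec{\st'.\GM,\ldots}$ with $\st'.\GM=\st.\GM$, i.e.\ against the \emph{new} global memory. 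That predicate asks that replaying $\sigma_{\Buff(t).lst}.\sigma'.a.\sigma''$ sequentially from the final $\GM$ reproduce $\st.\LM(t)$; but a memory read occurring in $\sigma''$ was serviced, in the actual $\TSO_{k+1}$ run, by the global memory \emph{at the moment it executed}, which the later flush may have overwritten. Concretely, take $k=1$, $\Proc{1}$ with $\lb{a_1}{:}\ \Assign{x}{1}$, $\lb{a_2}{:}\ \Assign{z}{2}$, $\lb{a_3}{:}\ \Assign{\ell}{y}$, and $\Proc{2}$ with $\lb{b_1}{:}\ \Assign{y}{5}$, and the $\TSO_2$ run $\lb{a_1},\lb{a_2}$ (both buffered, buffer now full), $\lb{a_3}$ (reads $y=0$ from memory), $\lb{b_1}$, flush of $\Proc{2}$. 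In the resulting $\st$ we have $\st.\LM(1)(\ell)=0$ and $\st.\GM(y)=5$; the only admissible decomposition is $a=\lb{a_2}$, $\sigma'=\epsilon$, $\sigma''=\lb{a_3}$, and replaying it against $\st.\GM$ yields $\ell=5\neq 0$. So the Exec obligation is unsatisfiable at this reachable state, and your induction cannot be closed in the form you propose: the invariant you are carrying (which is the lemma's literal statement) is already violated here.

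The same defect undermines the ``catch-up'' you sketch for a \textsc{flush} by $t$ itself: the certificate supplied by the induction hypothesis was issued against the pre-flush global memory, and after the flush the memory reads in $\sigma''$ may resolve differently, so the catch-up run in $\TSO_k$ need not land on $\st.\LM(t)$; no commutation invariant between head-flushes and buffer-free suffixes can recover a value that has since been overwritten in $\GM$. To make the argument go through you must weaken the part-1 obligation so that it does not promise re-executability of the deferred suffix against the \emph{final} $\GM$---for instance, record for each deferred read the value it actually returned, or assert $\Exec{\cdot}$ only relative to the global memory at the instant the overflowing write was issued and track separately which locations have been flushed since. Note that the downstream use in Theorem~\ref{lem:fp} invokes the Exec predicate essentially at the moment the extra write is appended, where $\sigma'$ and $\sigma''$ enclose no intervening flushes, so a weakened invariant of this kind still suffices there. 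As it stands, the step you defer as ``the technical heart'' is not a delicate argument awaiting elaboration; it is the point at which the stated invariant fails.
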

%Proof of the above lemma is given in Appendix.  
%\svs{sak: Lemma \ref{lem:relatek+1tok} redundant, instead refer to the
%subset relation}
% \begin{lemma}\label{lem:relatek+1tok} % For all $n$, $\sigma$,
%$\st\in \St_{k}$ such that $\st_0\stackrel{\sigma}{\to} \st$,
%$\len{\sigma}=n$ and $k\ge 0$, there exists a state $\st'\in
%\St_{k+1}$ such that $\st'.\GM=\st.\GM$, $\st'.\LM=\st.\LM$, and for
%all $t\in \TID$, $\st'.\Buff_{k+1}(t)=\st.\Buff_k(t)$ and
$\st'.\Q(t)=\st.\Q(t)$.
% \end{lemma} 
The above lemma states that every state in $\St_{k+1}$
where the buffer sizes of all processes are less than $k+1$, is also
present in $\St_k$.  
%Lemma \ref{lem:relatek+1tok} states that every
%state in $\St_{k}$ is %also present in $\St_{k+1}$.  
The detailed proof of this lemma is given in the Appendix.  % Proof of the above
%lemma is obvious as $\TSO_{k+1}$ can match every transition taken by
%$\TSO_k$.
% 
Now we are ready to prove that after $k_0$, any increase in buffer
size does not yield any new reachable control location.
\begin{theorem}\label{lem:fp} For all $k$,
  \[
  \begin{array}{lc} (\nrestr{\St_k}{(\Q,\GM,\LM,\Buff_{lst})} =
\nrestr{\St_{k+1}}{(\Q,\GM,\LM,\Buff_{lst})}) &\limp\\
(\nrestr{\St_{k+1}}{(\Q,\GM,\LM,\Buff_{lst})} =
\nrestr{\St_{k+2}}{(\Q,\GM,\LM,\Buff_{lst})})
  \end{array}
  \]
\end{theorem}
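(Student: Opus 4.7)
The plan is to establish the nontrivial direction $\nrestr{\St_{k+2}}{(\Q,\GM,\LM,\Buff_{lst})} \subseteq \nrestr{\St_{k+1}}{(\Q,\GM,\LM,\Buff_{lst})}$; the reverse inclusion is immediate from $\St_{k+1} \subseteq \St_{k+2}$. I pick an arbitrary $\st \in \St_{k+2}$ and aim to exhibit $\st_R \in \St_{k+1}$ with the same projection.

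First, I invoke Lemma \ref{lem:relatektok+1} with its bound $k$ instantiated to $k+1$. This yields $\st' \in \St_{k+1}$ with $\st'.\GM = \st.\GM$ and, thread by thread, either (a) $\len{\st.\Buff(t)} < k+2$ so $\st$ and $\st'$ coincide on $\Q(t), \LM(t), \Buff(t)$, or (b) $\len{\st.\Buff(t)} = k+2$ so $\st'.\Buff(t)$ is $\st.\Buff(t)$ with its rightmost entry $(x_t, v_t)$ chopped off, together with a control-flow witness $\sigma'_t . a_t . \sigma''_t$ in $\delta_t$ whose single write $a_t$ produces $(x_t, v_t)$ and ends at $\st.\Q(t)$ with local memory $\st.\LM(t)$. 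If case (b) never occurs, then $\st_R = \st'$ already works.

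For the interesting case where some thread $t$ saturates its buffer in $\st$, I apply the hypothesis $\nrestr{\St_k}{(\Q,\GM,\LM,\Buff_{lst})} = \nrestr{\St_{k+1}}{(\Q,\GM,\LM,\Buff_{lst})}$ to $\st'$ to obtain an equivalent $\st'' \in \St_k$ with identical projection. The payoff is that every buffer in $\st''$ has length at most $k$, leaving at least one unused slot below the $\TSO_{k+1}$ bound. Starting from $\st''$ I then replay, inside $\TSO_{k+1}$, the lemma's witness sequence $\sigma'_t . a_t . \sigma''_t$ for each full-buffer thread $t$ on a consistent interleaving. Because $\sigma'_t$ and $\sigma''_t$ by construction do not modify $\Proc{t}$'s buffer, the only buffer change for $t$ is the single write $a_t$ appending $(x_t, v_t)$, keeping the buffer within bound $k+1$. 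The resulting $\st_R$ then agrees with $\st$ on $\GM$, on $\Q(t)$ and $\LM(t)$ for every thread, and on $\Buff_{lst}$ (the appended $(x_t, v_t)$ becomes the last write to $x_t$, while other variables' last-writes are preserved).

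The hard part will be showing that this replay is genuinely executable in $\TSO_{k+1}$ and yields a projection matching $\st$ exactly. I will need to argue (i) that every read appearing inside $\sigma'_t$ or $\sigma''_t$ returns the same value it did in the original execution at $\st$, so that the control-flow witness is validated under $\TSO$ semantics and not merely at the automaton level; (ii) that no incidental flushes of other threads corrupt the last-write entries in $\Buff_{lst}$ that I am trying to match; and (iii) that when several threads are simultaneously in case (b), their replayed sequences can be scheduled without blocking each other on buffer-full conditions. I expect to discharge these by induction on the number of full-buffer threads in $\st$, so that each inductive step only has to append one write to one thread's buffer while the other threads already sit at the correct projected state, thereby isolating the use of Lemma \ref{lem:relatektok+1} to a single thread at a time.
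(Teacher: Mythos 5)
Your proposal follows essentially the same route as the paper's own proof: reduce to the nontrivial inclusion, use Lemma~\ref{lem:relatektok+1} to peel off the last buffered write of a saturated thread together with a single-write witness sequence $\sigma.a.\sigma'$, pass through the hypothesis to land in $\St_k$ (hence in $\St_{k+1}$ with a free buffer slot), and replay that witness inside $\TSO_{k+1}$ to recover a state matching $\st$ on $(\Q,\GM,\LM,\Buff_{lst})$. The only visible differences are that you treat several simultaneously full buffers explicitly by induction and flag the value-preservation obligations as open, whereas the paper dismisses the former with a ``without loss of generality'' and packages the latter into its $\mathsf{Exec}$ predicate; if anything, your version is slightly more careful on both counts.
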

\begin{proof} % We will show that for every state $\st\in \St_{k+2}$
there exists a state $\st'\in \St_{k+1}$ such that $\st.\Q=\st'.\Q$,
$\st.\GM = \st'.\GM$, $\st.\LM=\st'.\LM$ and
$\st.\Buff_{lst}=\st'.\Buff_{lst}$.  It is sufficient to show that
$(\nrestr{\St_{k+2}}{(\Q,\GM,\LM,\Buff_{lst})} \subseteq
\nrestr{\St_{k+1}}{(\Q,\GM,\LM,\Buff_{lst})})$ as the other side of
inclusion holds. Let us prove it by contradiction, i.e. there is a
state $\st\in \St_{k+2}$ such that no state $\st'\in \St_{k+1}$ exists
with $\st.\Q=\st'.\Q$, $\st.\GM=\st'.\GM$, $\st.\LM=\st'.\LM$ and
$\st.\Buff_{lst}=\st.\Buff_{lst}$.  Following Lemma
\ref{lem:relatektok+1}, this state $\st$ must have at least one buffer
with $k+2$ entries in it. Without loss of generality let $t\in \TID$
such that $\st.\Buff_{k+2}(t)$ is the only full buffer.
 \begin{enumerate}
  \item Clearly, there exists a state $\st'\in \St_{k+2}$ where all
buffers except $t$ are the same as in $\st$, $\st'.\Buff_{k+2}(t)$ is
of size $k+1$ and there exists a sequence of transitions
$\sigma.a.\sigma'$ from $\st'.\Q(t)$ to $\st.\Q(t)$ by process $t$
with only one write operation $a$.
  \item \label{pnt} For state $\st'$, the conditions $\st'.\GM=\st.\GM$ (as no
flush operation in $\sigma$), and $\mathsf{Exec}(\st'.\GM,\lbr
\st'.\LM(t),\lbr \st'.\Buff_{lst}(t),\lbr \sigma.a.\sigma',\lbr
\st.\LM(t))$ hold.
  \item As all buffers of $\st'$ are of size at most $k+1$ therefore
$\st'$ also exists in $\St_{k+1}$ (Lemma \ref{lem:relatektok+1}).
  \item \label{p3} As $\nrestr{\St_k}{(\Q,\GM,\LM,\Buff_{lst})} =
\nrestr{\St_{k+1}}{(\Q,\GM,\LM,\Buff_{lst})}$ holds, therefore there
exists a state $\st''\in \St_{k}$ such that (i) $\st''.\GM=\st'.\GM$,
(ii) $\st''.\LM=\st'.\LM$, (iii) $\st''.\Q=\st'.\Q$, and (iv)
$\st''.\Buff_{lst}(t)=\st'.\Buff_{lst}(t)$ for all $t\in \TID$.
  \item This state $\st''$ can have at most $k$ entries in its process
buffers. Therefore this state must be present in $\St_{k+1}$ as well.
  \item Using Point \ref{pnt} and the conditions (i),(ii),(iii), and (iv) of Point
\ref{p3} above, we get $\mathsf{Exec}(\st''.\GM,\lbr \st''.\LM(t),\lbr
\st''.\Buff_{lst}(t),\lbr \sigma.a.\sigma',\lbr \st.\LM(t))$. This
implies that after executing the sequence $\sigma.a.\sigma'$ by
process $t$ from state $\st''$ in $\St_{k+1}$ the resultant state, say
$\st'''$ will have at most $k+1$ write entries in the buffer of
process $t$. Further the global memory, local memories, control states
and last writes to shared variables in buffers will be identical in
$\st'''$ and $\st$. Therefore $\st'''\in \St_{k+1}$ is the matching
state with respect to $\st$, a contradiction.
 \end{enumerate}
\end{proof}

\section{Trace partitioning approach}\label{sec:tracepart}
As a consequence of Theorem \ref{lem:fp} one can use an explicit state
model checker for state reachability analysis of finite data domain
programs. However, in this paper we are interested in adapting a
recently proposed trace partitioning based verification method
\cite{Farzan:2013:IDF:2429069.2429086,DBLP:journals/corr/NarayanGA15}
for relaxed memory models. This method has been shown very effective
for verification under the SC memory model. The approach for SC
verification, as given in \cite{DBLP:journals/corr/NarayanGA15}, is
presented in Algorithm \ref{fig:algo:partition}. Firstly, an automaton
is built that represents the set of symbolic traces under the SC
memory model. For SC memory model such an automaton is obtained by
language level shuffle operation
\cite{Riddle:1979:ASS:2245746.2245934,Hopcroft:2001:IAT:568438.568455}
on individual processes. Subsequently, a symbolic trace is picked from
this automaton and checked against a given safety property using
weakest precondition axioms \cite{Dijkstra:1975:GCN:360933.360975}. If
this trace violates the given property then we have a concrete
erroneous trace. Otherwise, an alternating finite automaton (AFA)
\cite{Chandra:1981:ALT:322234.322243} is constructed from the proof of
correctness of this trace.

 The AFA construction algorithm ensures that
every trace in the language of this AFA is correct and hence can be
safely removed from the set of all symbolic traces of the input
program. This process is repeated until either all symbolic traces are
proved correct or an erroneous trace is found. This algorithm is sound
and complete for finite data domain programs.

% (because of the undecidable nature of the problem).\\
% Analysis starts by constructing the state space with a small buffer bound and keeps on incrementing the bound until either a fixed point is reached or an error state is found. Instead of explicit state model checking here we explore the possibility of symbolic verification using trace partitioning approach. This approach was recently shown to be very effective in proving the correctness of concurrent programs under SC memory model. In this approach the set of all symbolic traces of a concurrent program under SC are partitioned in a number of equivalence classes such that it is sufficient to prove the correctness of only one trace from each of such class. In this approach, an AFA (alternating finite automaton) is constructed from the proof of a symbolic trace, say $\sigma$. If $\sigma$ satisfy the given safety property then by construction all traces in the language of this AFA also satisfy the same safety property. As a result, this set of traces (potentially infinite in case of programs with loops) can then be removed from the set of all symbolic traces of the input program. This procedure is repeated until either all traces are proved correct or an actual erroneous trace is found. Overall algorithm, as discussed in \cite{arxiv}, is shown in Figure \ref{fig:algo:partition}. 
% \newcommand{\run}{\sigma}
% \newcommand{\rev}{\mathrsf{rev}}
% 	\newcommand{\afa}{\mathcal{A}}
\vspace{-0.5cm}
\begin{figure}
 \removelatexerror
% \begin{minipage}[b]{0.6\linewidth}
%  \vspace{0.5cm}
%  %  \caption{AFA of Figure \ref{fig:petex:wp} after optimizations}
% %  \label{fig:petex:redraw}
%  \end{minipage}
%   \hspace{.7cm}
%  \hspace{1cm}
%  \begin{minipage}[b]{0.5\linewidth}
\scalebox{0.85}{\parbox{\textwidth}{%
\begin{algorithm}[H]
\SetAlgoLined
% \SetKwFunction{isReady}{isReady}
\KwIn{A concurrent program $\mathcal{P}=\set{p_1,\cdots,p_n}$ with safety property $\phi$}
\KwResult{$yes$, if program is safe else a counterexample}
Construct the automaton $\aut{\mathcal{P}}$ to capture the set of all SC traces of $P$\;
Let {$\mathtt{tmp}$} be the language of $\aut{\mathcal{P}}$\;
\While{$\mathtt{tmp}$ is not empty}{\label{finalloop}
 \label{four} Let $\run \in \mathtt{tmp}$ with $\phi$ as a safety assertion to be checked\;
 Let $\afa{\run,\neg \phi}$ be the AFA constructed from $\run$ and $\neg \phi$ \label{finalafa}\;
%  Let $\HMap$ be the assertion constructed for this AFA\;
 \uIf{$\run$ violates $\phi$}{\label{finalcounter}
  $\run$ is a valid counterexample\;
  \KwRet($\run$)\label{finalret1}\;
  }\Else{
%   Let {$\afa{}'$} be the AFA modified by proposed transformations\;
  {${\mathtt{tmp}}:=\mathtt{tmp}\setminus Rev$}\label{langsub},  where $Rev$ is the reverse of the language of $\afa{\run,\neg \phi}$\;
  }
}
\KwRet($yes$)\label{finalret2}\; 
\caption{SC verification algorithm\cite{DBLP:journals/corr/NarayanGA15}}
\label{fig:algo:partition}
\end{algorithm}}}
\end{figure}
\vspace{-0.5cm}
The main challenge in applying this trace partitioning approach to the TSO memory model is the construction of the set of symbolic traces. Consider a program with two processes in Figure \ref{fig:ex2}. 
With initial values of shared variables $\vr{x}$ and $\vr{y}$ as 0, it is possible to have $\ell_1=\ell_2=0$ under the TSO memory model.
% \begin{wrapfigure}{l}{0.3\textwidth}
% \end{wrapfigure}
We can construct a symbolic trace $\lb{b.d.a.c}$ such that after
executing this sequence the state $\ell_1=\ell_2=0$ is reached. 

\begin{wrapfigure}[6]{l}{0.3\linewidth}
\removelatexerror
\scalebox{0.85}{\parbox{\linewidth}{%
$\begin{array}{l|l}
   \lab{a} \Assign{x}{1} & \lab{c} \Assign{y}{1}\\
   \lab{b} \Assign{\ell_1}{y} & \lab{d} \Assign{\ell_2}{x}
  \end{array}$
  \caption{}
  \label{fig:ex2}
}}
\end{wrapfigure}

Note that this trace is not constructible using the standard interleaving
semantics which was used to construct the set of traces under the SC
memory model. This is because of the program order between $\lb{a}$
and $\lb{b}$ in process 1 and between $\lb{c}$ and $\lb{d}$ in process
2. To use Algorithm \ref{fig:algo:partition} for the TSO memory model
we would like to first construct a set of all such symbolic traces
such that the sequential executions of these traces yield all
reachable states under the TSO memory model. For the above example, it
involves breaking the program orders $\lb{a}-\lb{b}$ and
$\lb{c}-\lb{d}$ and then applying standard interleaving semantics to
construct symbolic traces under the TSO memory model.
Let us look at another non-trivial example in Figure
\ref{fig:ex3}. 

\begin{wrapfigure}[5]{l}{0.3\linewidth}
\scalebox{0.85}{\parbox{\linewidth}{%
\hspace{2.5cm}$\begin{array}{l|l}
   \lab{a} \Assign{\ell}{2} & \lab{d} \Assign{m}{3}\\
   \lab{b} \Assign{y}{\ell+1} & \lab{e} \Assign{x}{m+2}\\
   \lab{c} \Assign{\ell}{x} & \lab{f} \Assign{m}{y}
  \end{array}$
    \caption{}
    \label{fig:ex3}
  }}
%    \caption{}
\end{wrapfigure}
Assume the initial values of all variables are 0, and
$\vr{\ell}$, $\vr{m}$ are local variables. In TSO it is possible to
have the final values of variables $\ell$ and $\vr{m}$ as 0. This can
happen when writes at $\lb{b}$ and $\lb{e}$ are still in the buffers
and the read operations at $\lb{c}$ and $\lb{f}$ read from the initial
values. Let us construct a symbolic trace whose sequential execution
will yield this state. In this trace label $\lb{e}$ must appear after
label $\lb{c}$ and label $\lb{b}$ must appear after label $\lb{f}$.
This means that the trace will break either the order between $\lb{b}$
and $\lb{c}$ or the order between $\lb{e}$ and $\lb{f}$. However, by
breaking the order between $\lb{b}$ and $\lb{c}$ the value of
$\ell=2$, assigned at $\lb{a}$, no longer flows to $\lb{b}$ and hence
$\vr{y}$ is assigned the wrong value 1. Similarly by breaking the
order $\lb{e}$ and $\lb{f}$ the value of $\vr{m}=3$, assigned at
$\lb{d}$ no longer flows to $\lb{e}$ and hence $\vr{x}$ is assigned
the wrong value 1. In a nutshell, it is not possible to create a
symbolic trace whose execution will yield the state where
$\ell=\vr{m}=0$, $\vr{x}=5$, and $\vr{y}=3$. Notice that the problem
appeared because of the use of the same local variable in two
definitions. Such a scenario is unavoidable  when
(i) multiple reads are assigned to the same local variable, and/or
(ii) in the case of loops the local variable appears in a write
instruction within the loop.

We propose to handle such cases by renaming local variables,
viz. $\vr{\ell}$ and $\vr{m}$ in this case. For example, the execution
of trace
$\sigma=\Assign{\ell}{2}.\fcolorbox{gray}{gray}{$\Assign{\ell_1}{\ell}$}\lbr.\Assign{\ell}{x}\lbr.\fcolorbox{gray}{gray}{$\Assign{y}{\ell_1+1}$}\lbr.\Assign{m}{3}\lbr.\fcolorbox{gray}{gray}{$\Assign{m_1}{m}$}\lbr.\Assign{m}{y}\lbr.\fcolorbox{gray}{gray}{$\Assign{x}{m_1+2}$}$
results in state $\ell=\vr{m}=0$, $\vr{x}=5,\vr{y}=3$ as required by a
TSO execution. Let us look at instructions highlighted in gray color
more carefully. We earlier saw that the problem arises when reordering
$\lb{b}-\lb{c}$ and $\lb{e}-\lb{f}$ instructions as their reordering
will break the value flows of $\ell$ and $\vr{m}$ from $\lb{a}$ and
$\lb{d}$ respectively. Therefore, we create new instances of these
local variables, $\ell_1$ and $\vr{m_1}$, to take the \emph{snapshot}
of $\ell$ and $\vr{m}$ respectively which are later used in the write
instructions $\lb{b}$ and $\lb{e}$. This renaming ensures that even if
we reorder $\lb{b}-\lb{c}$ and $\lb{e}-\lb{f}$ instructions (as done
in  $\sigma$) the correct value flows from $\lb{a}$ to $\lb{b}$ and
from $\lb{d}$ to $\lb{e}$ are not broken. We will show that for a
buffer bound of $k$ it is sufficient to use at most $k$ instances of
these local variables and they can be safely reused even in the case of loops. We call such symbolic traces, that correspond to $\TSO_{k}$ executions, as \emph{SC interpretable traces}. Formally, 
 \emph{SC interpretation} of a trace $\sigma \in  \Labl^*$ is a function $\SCI:\Labl^* \times \Var \to \Val \cup \{\und\}$.
% Transition relation $\to_k$ is as defined in Figure \ref{fig:tsok}. Note that actual TSO semantics is obtained by setting the buffer bound $k$ to infinite which in turn is obtained by setting $k$ as $*$ in the definition of $\Buff_k$. The reason to define TSO semantics by first defining \emph{k-bounded} TSO semantics will be clear soon in the next section. Note that in the rule \ref{tag:bwrite}, a write to a shared variable takes place only if the buffer has less than \emph{k} (buffer-bound) elements.
such that $\SCI(\sigma,\vr{x})$  calculates the last value assigned
to variable $\vr{x}$ in the sequential execution of $\sigma$. For example, if $\sigma=\lb{a.b.c}$ where labels $\lb{a},\lb{b}$, and $\lb{c}$ denote $\ell:=3$, $\vr{x}:=\ell+2$ and $\vr{y}:=2$ respectively then $\SCI(\sigma,\vr{x})=5$ and $\SCI(\sigma,\ell)=3$. Label $\und$ is used to denote the in-feasibility of $\sigma$ as some boolean expressions in $\mathsf{assume}$ instructions may become unsatisfiable because of the values that flow in them. If $\sigma$ does not contain any assignment to $\vr{x}$ then $\SCI(\sigma,\vr{x})$ returns the initial value of $\vr{x}$. 

Let us now construct a transition system such that the traces of this transition system represent SC interpretable traces corresponding to $\TSO_{k}$ semantics. We represent this transition system as $\TSOr{k}=\langle \Str,\tor{k},\str_0 \rangle$. 
Every state $\str \in \Str$ is of the form $(\Q,\LI,\Buffr_k)$ such
that $\Q:\TID\to Q$ represents process control states, and
$\Buffr_k:\TID\to (\SV \times \Labl)^k$ represents per process buffers
of length $k$. Unlike the buffers of $\TSO_k$, these buffers contain write instruction labels along with the modified shared variable. A function $\LI:\TID \times \LV \to \mathbb{N}$ tracks the instances of the local variables which have been used (for renaming purposes) in the construction of traces up to a given state. 
\vspace{-0.7cm}
\begin{figure}
\begin{center}
\removelatexerror
\scalebox{0.85}{\parbox{\textwidth}{%
\begin{minipage}{0.5\linewidth}
\begin{align}\tag{\textsc{MRead}$^{\sharp}$}\label{tag:modmread}
\frac
{\begin{array}{c}\Inst(a)=(\ell:=x), \\~ \restr{\Buffr_k(t)}{\{x\}\times \Labl}=\epsilon\end{array}}
{(\Q,\LI,\Buffr_k)\stackrel{a}{\tor{k}}(\Q',\LI,\Buffr_k)}
\end{align}	
\end{minipage}
\hspace{1cm}
\begin{minipage}{0.5\textwidth}
\begin{align}\tag{\textsc{LWrite}$^{\sharp}$}\label{tag:modmlwrite}
\frac
{\begin{array}{c}\Inst(a)=(\ell:=e)\end{array}}
{(\Q,\LI,\Buffr_k)\stackrel{a}{\tor{k}}(\Q',\LI,\Buffr_k)}
\end{align}	
\end{minipage}
% 	\hspace{1cm}

\begin{minipage}{0.5\linewidth}
\begin{align}\tag{\textsc{Assume}$^{\sharp}$}\label{tag:modassume}
\frac
{\begin{array}{c}\Inst(a)=(\assume{e})\end{array}}
{(\Q,\LI,\Buffr_k)\stackrel{a}{\tor{k}}(\Q',\LI,\Buffr_k)}
\end{align}
\end{minipage}
\hspace{1cm}
\begin{minipage}{0.5\textwidth}
\begin{align}\tag{\textsc{flush}$^{\sharp}$}\label{tag:modflush}
\frac
{\begin{array}{c}\Buffr_k=(x,a).\Buffr'_k\end{array}}
{(\Q,\LI,\Buffr_k)\stackrel{a}{\tor{k}}(\Q',\LI,\Buffr'_k)}
\end{align}
\end{minipage}
% \end{minipage}
\caption{All rules assume transitions for thread $t$, ie. $\Q[t]=q$, $(q,a,q')\in \delta_t$, and $\Q'=\Q[t\leftarrow q']$}}}
\label{fig:tsok1}
\end{center}
% \vspace{-2cm}
\end{figure}
\vspace{-1cm}
First, we define $\tor{k}$ for simple cases, viz. read from memory,
operations associated with local variables like $\assume{e}$ and
$\ell:=e$, and non-deterministic flush. In Rules
\ref{tag:modmread}, \ref{tag:modmlwrite}, and \ref{tag:modassume} the
labels that denote these operations are put in the trace with only
change in the control state of the process. As there is no notion of
local and global valuation in a state $\str$ of the transition system,
no update takes place unlike in $\TSO_{k}$. For memory read operation,
in Rule \ref{tag:modmread}, the condition on the buffer of $\Proc{t}$
is the same as in $\TSO_{k}$. For non-deterministic flush operation,
Rule \ref{tag:modflush} removes the first label present in the buffer
of $\Proc{t}$ and puts that in the trace. In rule \ref{tag:modassume},
the assume instruction is simply put in the trace without evaluating
the satisfiability of the boolean expression. This is different from
the corresponding rule in $\TSO_{k}$. This difference follows from
the fact that we are only interested in constructing symbolic
traces. Symbolic model checking of these traces will ensure that only
feasible executions get analyzed (where all assume instructions hold
true).
Now let us look at the remaining three operations, viz. read from the buffer, write to the buffer and fence instruction, in detail.
\begin{description}
 \item[\textsc{Buffered Read}] Like $\TSO_k$, this transition takes place when $\Proc{t}$ executes an instruction $\ell:=\vr{x}$ to read the value of shared variable $\vr{x}$ and store it in its local variable $\ell$. 
 
%  \removelatexerror
\scalebox{0.8}{\parbox{\linewidth}{%
\[\tag{\textsc{BRead}$^{\sharp}$}\label{tag:modbread}
\frac
{\begin{array}{c}\Inst(a)=(\ell:=x), \restr{\Buffr_k}{\{x\}\times \Labl}=\alpha.(x,b),\\
\Inst(b)=(x:=e), \Inst(c)=(\ell:=e)
\end{array}}
{(\Q,\LI,\Buffr_k)\stackrel{c}{\tor{k}}(\Q',\LI,\Buffr_k)}
\]}}

 For this transition to take place, the buffer of $\Proc{t}$ must have at least one write instruction that modifies the shared variable $x$. Conditions $\restr{\Buffr_k}{\{x\}\times \Labl}=\alpha.(x,b)$ and $\Inst(b)=(x:=e)$ ensure that the last write to $x$ in $\Buffr_k$ of $\Proc{t}$ is due to instruction $\Inst(b)$ which is of the form $(\vr{x}:=e)$. Under these conditions, in $\TSO_k$, read of $x$ uses the value of expression $e$ to modify $\ell$. Whereas in $\TSOr{k}$ a label $c$ is added to the trace such that $\Inst(c)$ represents the assignment of $e$ to variable $\ell$.
 
 \item[\textsc{Buffered Write}] This transition takes place when $\Proc{t}$ executes a write instruction of the form $x:=e$. Let $\vv{\ell}$ be a set of local variables used in expression $e$. For each of the local variables $\ell$ in $\vv{\ell}$, an integer $\LI(\ell)$ is used to create an assignment instruction of the form $\ell_{\LI(\ell)}:=\ell$. These instructions are put in the trace (through corresponding symbolic labels $\vv{a_{\ell}}$). Further, expression $e$ is also modified where every instance of a local variable $\ell$ in $\vv{\ell}$ is substituted with $\ell_{\LI(\ell)}$. 
 
\scalebox{0.8}{\parbox{\linewidth}{%
\[\tag{\textsc{BWrite}$^{\sharp}$}\label{tag:modbwrite}
\frac
{\begin{array}{c}\Inst(a)=(sv:=e), \FV(e)=\vv{\ell}, \len{\Buffr_k}<k,\\
\forall \ell\in \vv{\ell},~ \mbox{ create a label $a_{\ell}$ (if not already present in $\Labl$)} s.t.\\~~~~~\Inst(a_{\ell})=(\ell_{\LI(\ell)}:=\ell), \LI'[\ell]=\LI[\ell]\%(k+1)+1\\
\mbox{ create a label $a'$ (if not already present in $\Labl$)} s.t\\
~~~~~\Inst(a')=(sv:=e'), e'=\subst{e}{\vv{\ell}}{\vv{\ell_{\LI(\ell)}}},\\
\Buffr_k'=\Buffr_k[t\leftarrow \Buffr_k[t].(sv,a')]
\end{array}}
{(\Q,\LI,\Buffr_k)\stackrel{\vv{a_{\ell}}}{\tor{k}}(\Q',\LI',\Buffr'_k)}
\]}}
 
 This modified expression $e'$ is denoted $\subst{e}{\vv{\ell}}{\vv{\ell_{\LI(\ell)}}}$ in Rule \ref{tag:modbwrite}. A label, $a'$, representing the assignment of $e'$ to $x$ is put in the buffer in the form of a tuple $(x,a')$.
 Note that the transition rule \ref{tag:modbwrite} increases the value of $\LI(\ell)$ (modulo $(k+1)$) for every local variable $\ell$ present in expression $e$. We can show the following property,
%  One natural question to ask then is about the range of function $\LI$. We can show that for a buffer size of $k$, the range of this function will not exceed $k$. More formally,
\begin{lemma}\label{lem:tso:wrap}
 For a state $\str=(\Q,\LI,\Buffr_k)$ of $\TSOr{k}$, if $\LI(\ell)=m$ then local variable $\ell_m$ does not appear in any write instruction used in buffers $\Buffr_k$.
\end{lemma}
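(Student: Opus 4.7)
The plan is to prove the lemma by induction on the length of the derivation in $\TSOr{k}$ leading to $\str$. The statement as given is too weak to carry the induction: in the \ref{tag:modbwrite} rule the counter is updated to $\LI'(\ell) = \LI(\ell)\%(k+1)+1$, and to rule out a collision between this fresh index and some \emph{older} buffered write, one needs to know \emph{which} instances of $\ell$ are currently in use inside the buffer, not merely that the old value $\LI(\ell)$ is unused.

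I would therefore strengthen the property to a cyclic-block invariant. Fix the thread $t$ that owns $\ell$, and let $I(t,\ell) = \{i \mid \ell_i \text{ occurs in some write currently in } \Buffr_k(t)\}$. The strengthened invariant asserts (a) $|I(t,\ell)| \le k$, and (b) $I(t,\ell) = \{\LI(t,\ell)-1,\ \LI(t,\ell)-2,\ \ldots,\ \LI(t,\ell) - |I(t,\ell)|\}$, where subtraction is taken modulo $k+1$ on the cyclic domain $\{1,\ldots,k+1\}$; in words, the occupied indices form the block of $|I(t,\ell)|$ immediate cyclic predecessors of $\LI(t,\ell)$. Clause (b) immediately yields $\LI(t,\ell)\notin I(t,\ell)$, which is the lemma.

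For the inductive step I would dispatch each rule. The base case is trivial since $\str_0$ has empty buffers, so $I(t,\ell)=\emptyset$ and the invariant holds vacuously. Rules \ref{tag:modmread}, \ref{tag:modmlwrite}, \ref{tag:modassume} and \ref{tag:modbread} touch neither $\LI$ nor buffered writes, so the invariant transfers verbatim. For \ref{tag:modflush} the oldest write is popped from the front of $\Buffr_k(t)$; by clause (b) its contribution to $I(t,\ell)$ (if any) is precisely the extremal entry $\LI(t,\ell) - |I(t,\ell)|$, so its removal shrinks the consecutive block at its oldest end while $\LI$ stays fixed. For \ref{tag:modbwrite} applied to $sv := e$ with $\FV(e) = \vv{\ell}$, variables outside $\vv{\ell}$ are unaffected; for $\ell \in \vv{\ell}$ the new buffered write uses $\ell_{\LI(t,\ell)}$, giving $I'(t,\ell) = I(t,\ell) \cup \{\LI(t,\ell)\}$, while $\LI'(t,\ell) = \LI(t,\ell)\%(k+1)+1$ is the cyclic successor, so the enlarged $I'(t,\ell)$ is exactly the block of $|I(t,\ell)|+1$ immediate predecessors of $\LI'(t,\ell)$; the premise $|\Buffr_k(t)| < k$ keeps $|I'(t,\ell)| \le k$.

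The main obstacle is recognising that the lemma as stated cannot be proved by direct induction and must be strengthened to this cyclic-block form; once (a) and (b) are in place, each rule is a routine verification. The only delicate bookkeeping is the modular arithmetic on $\{1,\ldots,k+1\}$, where one must check that $m\mapsto m\%(k+1)+1$ indeed implements the cyclic successor and that the slack $|I(t,\ell)| \le k < k+1$ always leaves the current value of $\LI(t,\ell)$ in a genuinely free slot.
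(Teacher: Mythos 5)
Your proof is essentially correct, but it takes a genuinely different route from the paper. The paper does not use induction at all: it argues directly by contradiction that if $\LI(\ell)=m$ and some buffered write used $\ell_m$, then the counter must have wrapped all the way around since that write was enqueued, which forces at least $k+1$ further $\ell$-using writes to have been enqueued after it; since flushes are FIFO, all of these would still sit behind the old write in $\Buffr_k(t)$, exceeding the bound $k$. So your opening claim that the statement ``must be strengthened'' is true only relative to your chosen method (induction on derivations); the paper sidesteps the issue with a counting/pigeonhole argument. What your approach buys is a reusable, checkable invariant: the cyclic-block description of the occupied indices is exactly the kind of fact one wants on hand for the subsequent equivalence proof of $\TSO_k$ and $\TSOr{k}$, and it makes the non-collision of indices an immediate corollary rather than an ad hoc observation. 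What the paper's argument buys is brevity, at the cost of leaving implicit the same facts you make explicit (FIFO order, one index increment per $\ell$-using write, and the harmlessness of the \textsc{Fence}$^{\sharp}$ reset, which is sound only because that rule demands an empty buffer). One small gap in your write-up: in the \ref{tag:modflush} case, clause (b) alone--being a statement about the \emph{set} of occupied indices--does not tell you that the write at the \emph{front} of the buffer is the one carrying the extremal index $\LI(t,\ell)-|I(t,\ell)|$; you need to add to the invariant that the $\ell$-indices occur in the buffer in cyclically increasing order (equivalently, that distinct buffered writes carry distinct indices and buffer order matches index order). This is immediate to maintain through \ref{tag:modbwrite} and \ref{tag:modflush}, but without it the flush case does not close.
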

\begin{proof}
 Suppose $\LI(\ell)=m$ holds. By assumption, local variables among
 processes are disjoint therefore the only possibility is that
 $\Buffr_k[t]$ contains a write instruction that uses local variable
 $\ell_m$. If this were the case then there must be at least $k+1$
 different writes appearing between that write and the time $\str$ is
 reached. This holds because every write, that uses a local variable
 $\ell$ first increments its index by 1 and wraps around after
 $k+1$. This incremented index is then used to create an instance of
 the local variable $\ell$ used in this write operation. But it
 contradicts our assumption that the buffer is of bounded length
 $k$.
\end{proof}
The above lemma is used in the equivalence proof of $\TSO_k$ and $\TSOr{k}$.
  \item[\textsc{Fence}] Fence instruction, like $\TSO_k$, gets enabled only when $\Buffr_k[t]$ is empty. In the resultant state, function $\LI(t,\ell)$ is set to 1 for every local variable $\ell$ of Process $\Proc{t}$. This enables the reuse of indices in Function $\LI$ while preserving Lemma \ref{lem:tso:wrap}.
  
  \scalebox{0.8}{\parbox{\linewidth}{\begin{align}\tag{\textsc{Fence}$^{\sharp}$}\label{tag:modbfence}
\frac
{\begin{array}{c}\Inst(a)=(\fence), \Buffr_k[t]=\epsilon\\
\LI'=\LI[(t,\ell)\leftarrow 1], \forall \ell\in \LV_t
\end{array}}
{(\Q,\LI,\Buffr_k)\stackrel{\epsilon}{\tor{k}}(\Q',\LI',\Buffr_k)}
\end{align}	}}\\

\end{description}
% 
% \subsection{Equivalence of $\TSOr{k}$ and $\TSO_k$}
% Before establishing the equivalence between these two transitions systems let us first define the meaning of 
To show the equivalence of $\TSO_k$ and $\TSOr{k}$ we want to prove
the following; (i) for every state $\st$ reachable in $\TSO_k$ there
exists a trace $\sigmar$ in $\TSOr{k}$ such that the \emph{SC
  interpretation} of $\sigmar$ reaches a state with the same global
memory and local memory as of $\st$, and (ii) for every trace
$\sigmar$ of $\TSOr{k}$ such that its \emph{SC interpretation} is not
$\und$ (i.e. execution should be feasible) there exists a state $\st\in
\TSO_k$ with same global and local memory as obtained after the
\emph{SC interpretation} of $\sigmar$. We formally prove the following
theorem in the Appendix.
\begin{theorem}\label{thm:eq}
 Transition systems $\TSO_k$ and $\TSOr{k}$ are equivalent in terms of
 state reachability. 
\end{theorem}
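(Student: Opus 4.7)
The plan is to prove both directions of the equivalence by induction on the length of executions, using a carefully chosen simulation relation between states of $\TSO_k$ and $\TSOr{k}$. For a reachable state $\st=(\Q,\LM,\GM,\Buff_k)$ of $\TSO_k$ and a state $\str=(\Q',\LI,\Buffr_k)$ of $\TSOr{k}$ reached along a trace $\sigmar$, I declare $\st$ and $\str$ related when $\Q=\Q'$; the SC interpretation of $\sigmar$ assigns the values $\GM$ to shared variables and $\LM$ to local variables; and for each $t\in\TID$, $\Buffr_k(t)$ is obtained from $\Buff_k(t)$ by replacing each pair $(x,v)$ with a pair $(x,a)$ where $\Inst(a)=(x:=e')$ and the SC interpretation of the prefix of $\sigmar$ up to $a$ evaluates $e'$ to $v$. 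Lemma \ref{lem:tso:wrap} is carried as an auxiliary invariant throughout.

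For direction (i), I induct on a $\TSO_k$ execution $\st_0 \stackrel{\sigma}{\to} \st$. For each step $\st \stackrel{a}{\to_k} \st''$ I exhibit an extension of the matching symbolic trace that is generated by $\TSOr{k}$ and whose SC interpretation produces a state matching $\st''$. The cases \ref{tag:mread}, \ref{tag:mlwrite}, \ref{tag:massume}, \ref{tag:mflush}, and \ref{tag:mfence} follow almost directly from the corresponding rules of Figure \ref{fig:tsok1}. The interesting case is \ref{tag:bwrite}: for an instruction $(x:=e)$ with $\FV(e)=\vv{\ell}$, I append first the snapshot labels $\vv{a_\ell}$ with $\Inst(a_\ell)=(\ell_{\LI(\ell)}:=\ell)$ and then the renamed write label $a'$ with $\Inst(a')=(x:=e')$ where $e'=\subst{e}{\vv{\ell}}{\vv{\ell_{\LI(\ell)}}}$. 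Since the snapshots are taken at the point of the write, the SC interpretation of the extended prefix assigns exactly $\Adenot{e}{\LM[t]}$ to $e'$, so when $a'$ is later ``flushed'' in the SC interpretation it writes precisely the value that sits in the concrete buffer. For \ref{tag:bread} I append the label $c$ with $\Inst(c)=(\ell:=e)$ mirroring the latest buffered write to $x$; the simulation relation guarantees that the SC interpretation of $e$ at that point matches the value $\TSO_k$ returns from the buffer.

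For direction (ii), I induct on the length of a $\TSOr{k}$ trace $\sigmar$ whose SC interpretation is not $\und$, matching each symbolic step with a concrete $\TSO_k$ step that preserves the simulation. The flush, LWrite, MRead, and Assume cases are routine; for Assume, feasibility of the SC interpretation is precisely what licenses the concrete $\assume{e}$ transition. For \ref{tag:modbwrite}, Lemma \ref{lem:tso:wrap} ensures that the freshly allocated snapshot variable $\ell_{\LI(\ell)}$ is not live inside any pending buffered write label of $t$, so inserting the snapshot assignments does not disturb values that earlier buffered writes will eventually read under SC interpretation; combined with the induction hypothesis, this justifies the concrete $\TSO_k$ write of $\Adenot{e}{\LM[t]}$. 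The \ref{tag:modbfence} rule's reset of $\LI(t,\ell)$ to $1$ is safe precisely because the buffer of $t$ is empty, so the invariant of Lemma \ref{lem:tso:wrap} is preserved.

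The main obstacle is arguing that the modular renaming scheme in \ref{tag:modbwrite} is sound in the presence of loops: I must show that when $\LI(t,\ell)$ wraps around modulo $k+1$, the reused instance $\ell_{\LI(\ell)}$ does not corrupt the value flow into any still-pending buffered write of $t$. This is exactly the content of Lemma \ref{lem:tso:wrap}, so the heart of the argument reduces to propagating that invariant through every transition of $\TSOr{k}$ and verifying that, at the moment each buffered write label $a'$ is consumed by \ref{tag:modflush}, the snapshot variables appearing in $\Inst(a')$ still carry the values they captured at the time of buffering. Once this is in place, both simulations close and state reachability in the two systems coincides on the projection to $(\Q,\GM,\LM)$.
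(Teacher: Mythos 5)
Your proposal is correct and follows essentially the same route the paper takes: a two-directional simulation argument matching the paper's stated proof obligations (i) and (ii), with the buffers related by pairing each concrete entry $(x,v)$ with a symbolic label whose renamed expression evaluates to $v$ under the SC interpretation, and with Lemma~\ref{lem:tso:wrap} doing exactly the work of guaranteeing that snapshot variables are not overwritten while their write is still pending. The only nitpick is phrasing: the buffer invariant should be stated as ``the SC interpretation of the \emph{current} trace evaluates $e'$ to $v$'' (the label $a$ is not yet in $\sigmar$ before it is flushed), but this is what your argument actually uses.
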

In Theorem \ref{lem:fp} we used the restricted set $\nrestr{\St}{\Q,\GM,\LM,\Buff_{lst}}$ as a means to define fixed point. However, there are no explicit representations of the global memory ($\GM$) and the local memory ($\LM$) in the state definition of $\TSOr{k}$. Therefore, in order to define a fixed point condition like Theorem \ref{lem:fp} we first augment the definition of state in $\TSOr{k}$ to include global memory and local memory. Let $\GMr:\SV\to \Lab$ and $\LMr:\TID\times \LV \to \Lab$ be the functions assigning labels (of write instructions) to shared variables and local variables respectively. Specifically, $\GMr(\vr{x})=\lb{a}$ means that the write instruction at label $\lb{a}$ was used to define the current value of $\vr{x}$ in this state. Similarly, $\LMr(t,\vr{\ell})=\lb{a}$ means that the write instruction at label $\lb{a}$ was used to define the current value of local variable $\ell$ of process $t$. Note that in the construction of $\TSOr{k}$ the values written by these write instructions are only being represented symbolically using instruction labels. Therefore we need a way to relate the instruction labels and the actual values written. For a concurrent program $P$ with finite data domain it is possible to construct an equivalent program $P'$ such that every assignment to variables in $P'$ is only of constant values.
For example, consider the program in Figure \ref{fig:orig} such that the domain of variable $\vr{x}$ is $\set{1,2}$. This program is equivalent to the program in Figure \ref{fig:mod} where only constant values are used in the write instructions. Here the domain of $\vr{x}$ is used along with if-then-else conditions to decide the value that needs to be written to $\vr{y}$.\\
% 
%  \begin{figure}[h]
% \begin{minipage}{\linewidth}
%   \removelatexerror
%   \scalebox{0.7}{\parbox{\linewidth}{%
\begin{wrapfigure}[8]{l}{0.3\linewidth}
  \begin{minipage}[b]{0.3\linewidth}
%   \vspace{1.3cm}
   \centering{
  $\begin{array}{l}
  \Assign{\ell}{x}\\
  \Assign{y}{\ell+3}
  \end{array}$
   \captionof{figure}{}
 \label{fig:orig}}
\end{minipage}\hspace{0.25in}
\begin{minipage}[b]{0.3\linewidth}
%   \scalebox{0.7}{\parbox{\linewidth}{%
   \centering{
$\begin{array}{l}
  \Assign{\ell}{x}\\
  \mathsf{if}(\ell=1)\\
  ~~\Assign{y}{4}\\
  \mathsf{else~if}(\ell=2)\\
  ~~\Assign{y}{5}\\
 \end{array}$ 
 \captionof{figure}{}
 \label{fig:mod}}
% \vspace{2cm}
\end{minipage}
\end{wrapfigure}

% 	\end{minipage}
% }}
%  \end{figure}
After this transformation, every write label uniquely identifies the value written to a shared variable. Hence the functions $\GMr$, $\LMr$ can be extended to $\SV\to \Val$ and $\LV\to \Val$ respectively. This allows us to use Theorem \ref{lem:fp} for checking the fixed point.

\newcommand{\Cmpt}{\mathsf{Cmpt}} \newcommand{\po}{\mathrm{po}}
\newcommand{\ppo}{\mathrm{ppo}} \newcommand{\delay}{\mathrm{Dlay}}
\subsection{Fence Insertion For Program Correction}\label{subsec:fence}
Let $P$ be a program that is correct under the SC memory model. Let
$\sigma$ be an execution of $P$ that violates the given safety
property under the TSO memory model. We can insert a $\fence$
instruction in $P$ so that $\sigma$ does not appear as an execution
under the TSO memory model. Towards this we use the \emph{critical
  cycle} based approach of \cite{ShashaSnirtoplas88} and
\cite{aglavestabilitycav11} to detect the locations of $\fence$
insertions. For an execution $\sigma$, let $\Cmpt_{\sigma}$ be a
\emph{competing}\cite{aglavestabilitycav11} or
\emph{conflicting}\cite{ShashaSnirtoplas88} relation on the read and
write events of $\sigma$ such that $(a,b)\in \Cmpt_{\sigma}$ iff (i)
both memory events operate on the same location but originate from
different processes, (ii) at least one of them is a write instruction,
and (iii) $a$ appears before $b$ in $\sigma$. Let $\po_{\sigma}$
denote the program order among instructions of processes present in
$\sigma$. This is defined based on the process specification. Let
$\ppo_{\sigma}=\po_{\sigma}\setminus \{(a,b)\mid a\in W,b\in R,
(a,b)\in \po_{\sigma}\}$ be a subset of $\po_{\sigma}$ preserved under
TSO memory model, i.e. everything except write-read orders.

An Execution $\sigma$ contains a critical cycle
$\stackrel{cs}{\to}\subseteq (\Cmpt_{\sigma} \cup \po_{\sigma})^+$ iff
(i) no cycle exists in $(\Cmpt_{\sigma} \cup \ppo_{\sigma})^+$, (ii)
per process there are at most two memory accesses $a$ and $b$ in
$\stackrel{cs}{\to}$ such that $\loc(a)\neq \loc(b)$, and (ii) for a
given shared variable $x$ there are at most three memory accesses on
$x$ which must originate from different processes.
% As an example, consider an execution in Figure \ref{fig:execpeter} obtained from the transition system $\TSOr{k}$. Associated relations $\Cmpt_{\sigma}$, $\po_{\sigma}$ and $\ppo_{\sigma}$ are show in the figure. A critical cycle of this execution is also shown in the Figure.
Following Theorem 1 of \cite{aglavestabilitycav11}, an execution in
TSO is sequentially consistent if and only if it does not contain any
critical cycle. Therefore, in order to forbid an execution in TSO that
is not sequentially consistent, it is sufficient to ensure that no
critical cycle exists in that execution.
% Note that, if an execution of $P$ under TSO violates the given property then it must be a non-SC execution provided that the program $P$ satisfied the given property under the SC memory model. 
% To avoid critical cycles \cite{ShashaSnirtoplas88} gave a notion of \emph{delay pairs}. 
% A delay pair $\delay_{\sigma}\subseteq \po_{\sigma}\setminus \ppo_{\sigma}$ is a set of process specific program orders which can be reordered in the underlying memory model (in this case TSO). 
To avoid critical cycles, we need to strengthen the $\ppo_{\sigma}$
relation by adding a minimal set of program orders such that Point (i)
of critical cycle definition is not satisfied, i.e. finding a set
$\delay\subseteq \po_{\sigma}\setminus \ppo_{\sigma}$, set of
write-read pairs of instructions within each process, such that
$(\Cmpt_{\sigma} \cup \ppo_{\sigma}\cup \delay)^+$ becomes
cyclic. Once we identify that minimal set of program orders we insert
$\fence$ instructions in between them to enforce the required
orderings.
% This approach was also used in other related works to find the fence locations and we simply borrowed it for our purpose. 
%  \begin{figure}
%  \removelatexerror
%  \scalebox{0.85}{\parbox{\textwidth}{%
 
% }}
%  \end{figure}

\paragraph{Overall Algorithm}
Algorithm that combines incremental buffer bounded verification and
fence insertion for finite data programs works as follows. We start
the verification with buffer bound of 0. Towards this, the transition
system $\TSOr{k}$ is constructed using the relation $\tor{k}$ given in
this section. This transition system is represented as an automaton
with error location representing the accepting states and initial
locations representing the initial state. The set of traces accepted
by this automaton are the passed to the trace partitioning algorithm
implemented by \cite{DBLP:journals/corr/NarayanGA15} in the tool
\texttt{ProofTraPar}. If an erroneous trace is found then the program
is not safe even under the SC memory model and hence the algorithm
returns the result as `Unsafe'. If all traces satisfy the given safety
property then the bound is increased by one and the analysis starts
again. If an error trace is found for non-zero buffer bound then the
critical cycles are obtained from this trace. Using these critical
cycles a set of fence locations are generated and the input program is
modified by inserting fences in the code. After the modification the
analysis again starts with the same bound. This is just an
implementation choice because even if we increase the bound after the
modification still the fixed point will be eventually reached.
%This process is repeated until a fixed point is reached.

\newcommand{\expeter}[1]{{\em {peterson}}}
\newcommand{\exszy}[1]{{\em szymanksi}}
\newcommand{\exabp}[1]{{\em ABP}}
\section{Experimental Results}\label{sec:experiment}

{{
\begin{figure}[h]
%\scalebox{1}{\parbox{1\linewidth}{%
%  	\hspace{2cm}
{
\centering
\begin{tabular}{l|c|c|c|c|c | c}
Program & \# P& \multicolumn{2}{c}{ProofTraPar} & \multicolumn{2}{c}{Memorax\cite{memoraxtacas11}} & \# F  \\
& & Time & Memory & Time & Memory \\
& & (Sec)& (MB) & (Sec) & (MB)\\
\hline& & & &   &   \\
Peterson.safe & 2& 1.19 & $\mathbf{20}$ & $\mathbf{0.9}$ & 43 & 2\\
Dekker.safe & 2& $\mathbf{1.6}$ & $\mathbf{21.3}$ & 54.2 & 676 & 2\\
Lamport.safe & 2& $\mathbf{17}$ & $\mathbf{42}$ & 97 & 2312 & 4\\
Szymanksi.safe & 2 & 27 & 121 & $\textbf{ERR}$ & $\textbf{ERR}$ & 4\\
% TimeVarMutex.safe & t & m & t & m \\
Alternating Bit(ABP) & 2& 3.12 & 39 & $\mathbf{0.17}$ & $\mathbf{11}$ & 0\\
% stack\_safe.c & \textbf{1.22} & 1.8\\
Dijkstra & 2& $\mathbf{16}$ & $\mathbf{70}$ & - & - & 2\\
Pgsql & 2& $\mathbf{1.2}$ & $\mathbf{20}$ & 210 & 2800 & 2\\
RWLock.safe (2R,1W) & 3&  $\mathbf{41}$ & $\mathbf{164}$ & - & - & 2\\
clh & 2& $\mathbf{326}$ & $\mathbf{1500}$ & - & -  & 0\\
% Burns & t & m & 0.01 & 4.6 \\
Simple-dekker & 3& $\mathbf{103}$ & $\mathbf{155}$ & 600 & 3280 & 3\\
% RWLock.unsafe (2R+2W)  & t & m & t & m & t & m\\
Qrcu.safe (2R,1W) & 3& $\mathbf{490}$ & $\mathbf{3000}$ & - & - &  0 \\ \\
% Qrcu.unsafe (2R+1W) & t & m & t & m & t & m\\
\end{tabular}
%}}
%\scalebox{1}{\parbox{\textwidth}{%
\caption{Comparison of our tool with Memorax\cite{memoraxtacas11}. Time out, denote by `-' is set to 10 minute. \#P and \#F denote number of processes and number of fences synthesized.}
\label{tab:comparison}
%}}
}
\vspace{-0.5cm}
\end{figure}

}}

We implemented our approach by extending the tool \texttt{ProofTraPar}
which implements the trace partitioning based approach of
\cite{DBLP:journals/corr/NarayanGA15}. We implemented $\TSOr{k}$
semantics and fixed point reachability check on top of
\texttt{ProofTraPar}. Its performance was compared against
\texttt{memorax} which implements sound and complete verification of
state reachability under the TSO memory model. Note that other tools
which exist in this landscape of relaxed memory verification either
consider SC behaviour as specification
\cite{persistenceesop15,aglavestabilitycav11,bouajjanirobustnessesop13}
or are sound but not complete
\cite{remmextacas13,DBLP:conf/tacas/AbdullaAAJLS15,Zhang:2015:DPO:2737924.2737956}. However
\texttt{memorax} does not assume any bound on the buffer size and it
uses the coverability based approach of well-structured-transition
systems. Table \ref{tab:comparison} compares the performance, in terms
of time and memory, of our approach with \texttt{memorax}. We ran all
experiments on Intel i7-3.1GHz, 4 core machine with 8GB RAM. Out of 11
examples, our tool outperformed \texttt{memorax} in 8
examples. Our tool not only performed better in terms of time but also
in terms of the memory consumption. Except in two cases, \emph{qrcu}
and \emph{clh queue}, on every other example our tool consumed less
than 200 MB of RAM. Whereas \texttt{memorax} in most cases took more
than 500 MB of RAM and in some cases even touched the 3GB
mark. Programs like \emph{Alternating bit protocol}, \textit{clh queue} and \emph{Qrcu}(quick read copy update algorithm) remain
correct even under TSO memory model. For other algorithms where bugs
were exposed under TSO we were able to synthesize 
fences to correct their behaviour.
% . Our tool automatically inserts fence instructions when an erroneous trace is found and starts the analysis again until it reaches a fixed point. 
% Even though for programs like \emph{Alternating bit protocol} and \emph{Szymanksi} \texttt{memorax} performed better, the difference in performance was not as much as in other examples where our approach outperformed \texttt{memorax}.

\paragraph{Analysis of the benchmarks} 
\texttt{memorax} performed better on three benchmarks, viz.
\expeter{}, \exszy{}, and \exabp{}. After carefully looking at them we
realized that the performance of \texttt{memorax} loosely depends upon
the number of backward control flow paths from error location to the
start location, and number of write instructions present along those
paths. In benchmarks where \texttt{ProofTraPar} outperformed
\texttt{memorax}, viz. \emph{dekker}, \emph{lamport}, \emph{clh},
\emph{qrcu}, more than two such control paths exist. To further check
this hypothesis experimentally we modified \expeter{} and \exabp{} to
add a write instruction along an already existing control flow path
where no write instruction was present. This write was performed on a
variable which was never read and hence did not affect the
program. After this modification \texttt{memorax} became more than 6
time slower in analyzing these two benchmarks. Further, the analysis
of these modified benchmarks with \texttt{ProofTraPar} exhibited a
very little (less than a second) increase in time as compared to the
unmodified benchmarks. Interestingly, a bug was exposed in
\texttt{memorax} when we made a similar change in \exszy{}. As a
result of this bug the modified program \exszy{} was declared as
safe. Note that the original program \exszy{} is incorrect under TSO
and we only modified the code by adding a write instruction to an
unused variable. Therefore it is not possible for the modified program
\exszy{} to become safe unless there is a bug in the tool. This bug
was also confirmed by the author of \texttt{memorax}. 

\vspace{-0.3cm}
\subsection{Discussion}
Note that \texttt{memorax} starts from the symbolic representation of
all possible configurations of buffer contents which it further refines
using backward reachability analysis. However, in our approach we
start from a finite and small buffer bound ( an under-approximation)
and keep expanding until we reach a fix point. We believe that this
difference, picking an over-approximation as a starting point in one case and an
under-approximation as a starting point in the other case, plays a crucial role 
in the better performance of our approach on these benchmarks.

In all benchmarks, except \expeter{}, buffer size of 1 was
sufficient to expose the error. In \expeter{}, buffer size of 2 was
needed to expose the bug. Effectively, the buffer size depends upon the
minimum distance (along control flow path) between a write and a read
instruction within a process whose reordering reveals the bug. In the case
of \expeter{}, this distance is 2 since the reordering of first
instruction (write to $flag_i$) and third instruction (read of
$flag_j$) within each process reveals the bug. In our benchmarks
fence instructions were inserted after finding an erroneous trace, as
discussed in Section \ref{subsec:fence}. Fence instruction restricts
the unbounded growth of the buffer by flushing the buffer contents. As
a result, when a fence is inserted within a loop the buffer never grows
in size with loop iterations and fix point is reached quickly. In fact,
for all the benchmarks, if a bug was exposed with buffer size $k$ then
after inserting the fence instruction the fix point was reached with
buffer size $k+1$. Benchmarks which remain correct under TSO, a
larger buffer bound was required to reach the fix point and this bound
depends upon the number of write operations in each process. As a
result, their analysis took longer time and consumed more
memory. Detailed analysis of the benchmarks and the tool are available
at \url{www.cse.iitd.ac.in/~chinmay/ProofTraParTSO}.
% Currently our tool supports input in a customized format but in
% futre we plan to extend it to handle C files as input.

% \input{related}
\section{Conclusion and Future Work}\label{sec:conclusion}
This paper uses the trace partitioning based approach to verify state
reachability of concurrent programs under the TSO memory model. We
have also shown that for finite state programs there exists a buffer
bound such that if program is safe up-to that bound then the program is
guaranteed to be safe for unbounded buffers as well. This work can be
easily extended to PSO memory model as well. This method gives us an
alternate decidability proof of state reachability under TSO (and PSO)
memory model. We have also shown experimentally that for standard
benchmarks used in the literature such a bound is very small (in the
range of 2-4) and hence we may use SC verification based methods to
efficiently check concurrent programs under these memory models. 
We believe that for other buffer based memory models a buffer bound
can be shown to exist in a similar manner. 
Recently \cite{Lahav:2016:TRC:2837614.2837643} proposed a buffer based
operational semantics for C11 model. It will be interesting to
investigate the use of bounded buffer based method proposed in this
paper to that semantics as well.

\bibliography{paperlncs}
\bibliographystyle{acm}
\include{appendix}
\end{document}